\newcolumntype{L}{>{$}l<{$}}
\newcommand*\diff{\mathop{}\!\mathrm{d}}
\newcommand*\ham{\mathcal{ H }}
\newcommand*\const{\text{const}}
\newtheorem{theorem}{Theorem}
\begin{document}


\title{Dynamical Quantum Phase Transition Without An Order Parameter}
\begin{abstract}
    Short-time dynamics of many-body systems may exhibit non-analytical behavior of the systems' properties at particular times, thus dubbed dynamical quantum phase transition. Simulations showed that in the presence of disorder new critical times appear in the quench evolution of the Ising model.  We study the physics behind these new critical times. We discuss the spectral features of the Ising model responsible for the disorder-induced phase transitions. We found the critical value of the disorder sufficient to induce the dynamical phase transition as a function of the number of spins. Most importantly, we argue that this dynamical phase transition while non-topological lacks a local order parameter.
    
\end{abstract}
\author{O. N. Kuliashov}
\email{kulyashov.on@phystech.edu}
\affiliation{Russian Quantum Center, Moscow 121205, Russia}
\affiliation{Moscow Institute of Physics and Technology, Department of General and Applied Physics, Dolgoprudny, 141701, Russia}
\author{A. A. Markov}
\affiliation{Russian Quantum Center, Moscow 121205, Russia}
\affiliation{Lomonosov Moscow State University, Faculty of Physics, Moscow 119991, Russia}
\author{A. N. Rubtsov}
\affiliation{Russian Quantum Center, Moscow 121205, Russia}
\affiliation{Lomonosov Moscow State University, Faculty of Physics, Moscow 119991, Russia}

\maketitle
\section{Introduction}

Despite huge progress in recent years, out-of-equilibrium collective phenomena are far worse understood than equilibrium phenomena \cite{eisert2015quantum}. Even the definitions of a `phase' and `phase transition' are not yet quite clear. Dynamical Quantum Phase Transitions (DQPT) \cite{heyl2013dynamical,heyl2018dynamical} are one of the more established and elaborated attempts to build such an understanding. Equilibrium phase transitions occur when a system substantially and sharply changes its properties as some parameter is varied. For example, it could be temperature or concentration. The dynamical quantum phase transition is a sharp change of systems' properties, happening as the time progresses. Let us consider the analogy more closely.

Equilibrium phase transitions are accompanied by singularities in thermodynamic potentials \cite{landau2013statistical}. Suppose, a varied parameter is temperature and we are trying to find a critical point $T_c = 1/\beta_c$. In canonical ensemble we should look at the points of singularity of the free energy per particle $f(\beta)$ as a function of the inverse temperature $\beta = 1/T $: 

\begin{equation}
\begin{split}
\mathcal{Z}(\beta) &= \Tr e^{-\beta \mathcal{H}} = \sum_i \langle{\psi_i}\vert e^{-\beta \mathcal{H}}\vert{\psi_i}\rangle\\
f(\beta) &= - N^{-1} \ln \mathcal{Z}(\beta).
\end{split}
\end{equation}
Here $\vert{\psi_i}\rangle$ stands for any basis, \(N\) is the number of particles in the system and $\mathcal{Z}(\beta)$ is partition function. In finite systems partition function is a finite sum of exponents, therefore it is entire as a function of the complex temperature $z$. Therefore, derivatives of the free energy $f(z)$
\begin{equation}
\label{eq:f_derivative}
\frac{d f(z)}{dz} =  -N^{-1} \frac{1}{Z}\frac{d Z(z)}{dz}, 
\end{equation}
can diverge only at the zeros of the partition function $\mathcal{Z}(z)$. In a finite system, $z$ coordinate of a partition function zero, called Fisher zero, must have an imaginary part. Otherwise the partition function is a sum of positive numbers and can not be zero. However, a Fisher zero can approach the real inverse-temperature line in the thermodynamic limit \cite{yang1952statistical, fisher1965nature,bena2005statistical}. Thereby there appears a singularity in the free-energy and a phase transition at a real temperature $T_c = 1/\beta_c$ in the thermodynamic limit.  

 The theory of DQPTs  is built in close formal analogy to the theory of equilibrium phase transitions \cite{heyl2013dynamical}. Let us for simplicity consider the case of quench dynamics. Suppose that a system is prepared in the ground state $\ket{\psi_0}$ of the Hamiltonian $\mathcal{H}_0$. Then the system is evolved by a different Hamiltonian $\mathcal{H}$. The key observation, leading to the concept of DQPT, is that the Loschmidt amplitude \(\mathcal{G}(t)\)\footnote{Note that the definition of Loschmidt amplitude is not consistent in the literature. We follow here that of the Ref. \cite{heyl2018dynamical}. While it is used frequently in literature on DQPT, in more general context a different convention $\mathcal{G}'(t) = \bra{\psi_0}e^{-it \ham_0} e^{-it \ham}\ket{\psi_0}$ is more typical. The definitions agree up to a phase factor $e^{iE_0t}$.} is formally similar to a partition function of an equilibrium system at an imaginary temperature:

\begin{equation}
    \label{eq:loschm_def}
    \mathcal{G}(t) = \bra{\psi_0} e^{-it \ham}\ket{\psi_0} \longleftrightarrow \mathcal{Z}(\beta) = \sum_i \bra{\psi_i} e^{-\beta \mathcal{H}}\ket{\psi_i}.
\end{equation}
It is more convenient to work with the probability $\mathcal{L}(t) = |\mathcal{G}(t)|^2$ called Loschmidt Echo (LE) rather than with the amplitude $\mathcal{G}(t)$.

As $\mathcal{G}(t)$ is interpreted as a dynamical partition function, the rate function \(\lambda(t) = - N^{-1} \log \mathcal{L}(t) \) can be considered analogous to the free-energy per particle,  with the time being interpreted as a complex inverse temperature:
\begin{equation}\label{eq:analogy}
    \lambda(t) = - N^{-1} \ln \mathcal{L}(t) \leftrightarrow f(\beta) = - N^{-1} \ln \mathcal{Z}(\beta).
\end{equation}

Equilibrium phases are separated by the points in parameter space, where free energy per particle $f(\beta)$ has a singularity. The correspondence in \cref{eq:analogy} suggests to inspect closely the points where the rate function is non-analytic, or equivalently \(z_{LE}\) - the zeros of the LE.  The Loschmidt echo is a measure of probability to find a system in the state it was prepared in. When LE is equal to zero, an instantaneous sate is orthogonal to the initial state. So, intuitively, the critical times, when LE is zero, correspond to substantial changes in the system's state.



It was demonstrated \cite{heyl2013dynamical} that these critical times might indeed correspond to an interesting dynamical process, dubbed the dynamical quantum phase transition. Later works have shown that the similarities between equilibrium and dynamical quantum phase transitions can be pushed much further the formal analogy. In many cases  (with some exceptions \cite{vajna2014disentangling}) DQPT occurs during a quench across an underlying equilibrium phase transition \cite{heyl2018dynamical,heyl2013dynamical,karrasch2013dynamical,schmitt2015dynamical,schmitt2015dynamical,vajna2015topological}. That is, when initial and post-quench Hamiltonians correspond to different equilibrium phases. An analog of first-order phase transitions was suggested in Ref.\cite{canovi2014first}. Topological phase transitions \cite{thouless1982quantized} have a non-equilibrium counterpart as well \cite{vajna2015topological,budich2016dynamical, schmitt2015dynamical}. At least some of the DQPT obey the dynamical scaling defined by a corresponding out-of-equilibrium analog of the universality class \cite{heyl2015scaling,trapin2021unconventional}. 

Having these similarities to the equilibrium phase transition, a natural question is whether any kind of order parameter exists that can signal DQPT. In the case of the first observed DQPT in the transverse-field Ising model \cite{jurcevic2017direct}, the answer is positive. In Ref.~\cite{heyl2013dynamical} the Ising chain was quenched through an underlying phase transition between ferromagnetic and paramagnetic phases. In this case, the longitudinal magnetization oscillates precisely with the period corresponding to the critical time. More generally, a similar conclusion can be drawn for systems that undergo a DQPT across a symmetry-braking phase transition \cite{heyl2018dynamical,weidinger2017dynamical}. Another interesting idea in this direction is to introduce a localized version of the free energy \cite{halimeh2021local}. Topological dynamical quantum phase transitions were shown to have a non-local order parameter \cite{budich2016dynamical}. Whether a local order parameter exists for non-topological phase transitions have been an open question \cite{heyl2018dynamical}.

In the present manuscript we study a disorder-induced dynamical quantum phase transition in the Transverse Field Ising Model (TFIM), first numerically observed in Ref.~\cite{cao2020influence} and possible local order parameters for the transition. We argue that this phase transition is local in \(k\)-space and can be attributed to a singularity in a Bardeen-Cooper-Schrieffer (BCS) wave-function of the corresponding fermionic model. We find a lower bound for the disorder amplitude required to cause the transition as a function of the number of spins in the system. Furthermore, we demonstrate that a large class of local order parameters can not be used to witness the phase transition. As we shall see, the phase transition is also not a topological one, therefore the phase transition does not fit into the usual equilibrium categories.  

The article is organized as follows: in \cref{sec:model} we describe the model we are working with and techniques for the calculation of LE and spin-spin correlators in TFIM. In \cref{sec:main_res} we summarize our main results: the appearance of the second series of DQPTs and the lack of their influence on the observables in the system. In \cref{sec:theory_zeros} we derive disorder-induced corrections for Fisher zeros and obtain conditions on the disorder necessary for the emergence of the new DQPTs. In \cref{sec:theory_correls} we derive theoretical bounds on the influence of disorder on observables and show that the dynamics of correlators near the time-critical point may be influenced by Fisher zeros arbitrarily far from the critical point. Finally, in \cref{sec:conclusion} we discuss how our findings might influence the general framework of DQPT.

\section{System and Method} \label{sec:model}

We look at the following transverse field Ising model with periodic boundary conditions:
\begin{equation} \label{eq:main_ham}
    \ham(\{h_i\}) = -J\sum_{i=1}^N \sigma_i^x \sigma_{i+1}^x +\sum_{i=1}^N h_i \sigma_i^z.
\end{equation}
Here \(N \gg 1\) is the number of spins, $i$ denotes the position of a spin and the coupling constant is set to \(J = 1\) from now on. Initially, the system is prepared in the ground state of the Hamiltonian \cref{eq:main_ham} with zero magnetic field on all the sites \(\forall i \; h_i^0 = 0\). Thus, the system is prepared in the ferromagnetic phase. At time \(t = 0\) the on-site magnetic fields are suddenly changed:

\begin{equation}\label{eq:quench}
    \ham_0 (\{h_i^0\}) \longrightarrow \ham_1 (\{h_i^1\})
\end{equation}

 The after-quench Hamiltonian has random fields distributed around a value exceeding the critical field in the Ising model:   
 
\begin{equation}\label{eq:dis_distr_def}
    \forall i \; h^1_i = h^1 + \delta_i \, , \, \delta_i \in \mathcal{U}_{[-D, D]} , \; h^1 > h_{crit} =1,
\end{equation}
where \(\mathcal{U}_{[-D, D]}\) describes a uniform random distribution. 

For such a quench two series of critical times are observed \cite{cao2020influence}. The first is the prototypical DQPT timescale connected to the ferromagnetic-paramagnetic phase transition. Secondly, there is a new series of critical times induced by disorder \cite{cao2020influence}. The new DQPT is in the focus of the present study. 

For the Ising model one can exactly calculate all the necessary quantities: the Loschmidt echo, the position of the Fisher zeros, and the average values of observables. This is possible due to the mapping \cite{jordan1993paulische} of the spins to free fermions with the Hamiltonian:
 
 \begin{equation}
        \ham(\{h_i\}) =-\sum_{i=1}^{N}c_{i}^{\dagger} c_{i+1}+c_{i}^{\dagger} c_{i+1}^{\dagger}+h.c.
        - 2 h_i c_{i}^{\dagger} c_{i}
         \label{eq:main_ham_fermions_pos_bas} 
\end{equation}

Hamiltonian of this form can be diagonalized in terms of Bogoliubov quasi-particles $\eta_\alpha$ \cite{bogoliubov1947theory,lieb1961two}, which are related to the $c$-operators by a unitary transformation, mixing creation and annihilation operators. For both the initial $\ham_0$ and the post-quench Hamiltonian $\ham_1$ we can write:   
\begin{equation} 
\begin{split}
        &\ham_0 = \sum_{\alpha=1}^{N} E^0_\alpha {\eta^0_\alpha}^{\dagger}\eta^0_\alpha
    \label{eq:main_ham_fermions_e_bas}\\
         &\begin{pmatrix}
    \eta^0 \\
    {\eta^0}^\dagger
    \end{pmatrix} = 
    U_0 \begin{pmatrix}
    c \\
    c^\dagger
    \end{pmatrix}\\
    \end{split} \qquad \qquad
    \begin{split}
         &\ham_1 =\sum_{i=\alpha}^{N} E_\alpha \eta_\alpha^{\dagger}\eta_\alpha\\
    &    \begin{pmatrix}
    \eta \\
     \eta^\dagger
    \end{pmatrix} = 
    U \begin{pmatrix}
    c \\
    c^\dagger
    \end{pmatrix}.
\end{split}   
\end{equation}
Here, the operators without a varying index denote the sets of creation and annihilation operators. For example in case of \(c\)-operators the notation should read:
 \(c^{\dagger}\equiv \{c^{\dagger}_1,...c^{\dagger}_N\}\) and \(c \equiv \{c_1,...c_N\}\). Thereafter, the same convention for operators without a varying index is used. The matrices \(U\) and \(U_0\) are assumed to be unitary to keep the canonical commutation relations among the operators \(\eta\). The energies \(E_\alpha\) and \(E^0_\alpha\) are chosen to be positive. Thus, the ground states of the Hamiltonians are the vacuum states \(\ket{vac}_{\eta^0}\)  and \(\ket{vac}_{\eta}\)  annihilated by all the \(\eta_\alpha\) and \(\eta^0_\alpha\) operators correspondingly. 


The initial state is the ground state of the Hamiltonian \(\ham_0\) with all the \(h_i\) set to zero. We will work in the sector with the even number of fermions. Such fermionic initial state corresponds to the fully polarized 'Schrodinger cat' spin state \cite{lieb1961two}:  

\begin{equation}\label{eq:init_state}
    \ket{\psi(0)} = \frac{\ket{\rightarrow} + \ket{\leftarrow}}{\sqrt{2}},
\end{equation}
where we denote by \(\ket{\leftarrow}\) and \(\ket{\rightarrow}\) two degenerate lowest energy eigenstates in which all spins are polarized along the \(x\) axis to the left and to the right respectively.

Now let us consider the evolution of the state \cref{eq:init_state}. The operators $\eta_i$ have a very simple dynamics \(\eta_\alpha(t) = \exp(-iE_\alpha t)\eta_\alpha(t)\).  Therefore, we can readily obtain the instantaneous state, once the initial state $\ket{vac}_{\eta^0}$ is expressed in terms of operators $\eta_\alpha$. This is possible due to an extension of the Thouless theorem \cite{thouless1960stability}, see Appendix E3 of Ref. \cite{ring2004nuclear}. It tells that there exists an antisymmetric matrix  $G_{\alpha\beta}$, such that the initial state $\ket{vac}_{\eta^0}$ can be related to the vacuum of $\ket{vac}_{\eta}$ as follows:

\begin{equation}\label{eq:thouless}
   \ket{vac}_{\eta^0} =  \frac{1}{\mathcal{N}}\exp\left(\sum_{\alpha\beta}\eta_\alpha^\dagger G_{\alpha\beta} \eta_\beta^\dagger\right)\ket{vac}_{\eta},
\end{equation}

where \(\mathcal{N}\) is a normalization coefficient. Thus, the post-quench state assumes the following form:  

\begin{equation}\label{eq:psi_t_full}
    \ket{\psi(t)} = \frac{1}{\mathcal{N}}\exp\left(\sum_{\alpha\beta}\eta_\alpha^\dagger(t) G_{\alpha\beta} \eta_\beta^\dagger(t)\right)\ket{vac}_{\eta}.
\end{equation}

The state $\ket{\psi(t)}$ has a form of the famous Bardeen-Cooper-Schrieffer wave function, which is a coherent bosonic state, with bosons formed by pairs of fermions. In \cref{eq:psi_t_full} the \((\alpha, \beta)\) element of the matrix \(G\) indicates the presence of a Cooper pair formed by \(\eta_\alpha\) and \(\eta_\beta\) modes. We will call the matrix \(G\) the BCS matrix, therefore. The matrix \(G\) can be found from the condition that \cref{eq:psi_t_full} is the ground state of the post-quench Hamiltonian (for more details see \cite{zhong2011loschmidt}). 

The Loschmidt echo and Fisher zeros are fully determined by the matrix \(G\) and energies \(E_\alpha\) of \(\ham_1\) \cite{zhong2011loschmidt}: 
 
 \begin{equation}\label{eq:loschm_calc}
        \mathcal{L}(t) = \prod_{\alpha, \beta>\alpha}\left(1 - \frac{4G_{\alpha\beta}^2}{(1 + G_{\alpha\beta}^2)^2}\sin^2\left(\frac{E_\alpha + E_\beta}{2} t\right) \right).
    \end{equation}
    
We will look at the zeros of the boundary partition function 
\begin{equation}\label{eq:bpartition}
    Z(z) = - N^{-1} \log \bra{\psi_0} e^{-z\mathcal{H}} \ket{\psi_0},
\end{equation}
which are called Fisher zeros \cite{fisher1965nature} and are connected with zeros of LE via \(z = i z_{LE}\). Thus, purely imaginary Fisher zeros correspond to real critical times. By requiring that the LE is equal to zero, \(\mathcal{L}(i z) = 0\), we obtain the coordinates of the Fisher zeros:

\begin{equation}\label{eq:zeros_calc}
    z_n (\alpha, \beta) = \frac{1}{E_\alpha + E_\beta} \left(\ln\abs{G_{\alpha\beta}}^2 + i(2n+1)\pi\right)
\end{equation}

From \cref{eq:zeros_calc} it is clear that the calculation of all Fisher zeros has the same computational complexity as the calculation of all elements of matrix \(G\). In turn, matrix \(G\) is expressed as \(G = -W_1^{-1}W_2\) where 
\begin{align}
U_0 U^{-1} = \begin{pmatrix}
W_1 & W_2 \\
W_2^* & W_1^*
\end{pmatrix}
\end{align}
and matrices \(U, U_0\) are from \cref{eq:main_ham_fermions_e_bas}. In other words, we need to diagonalize initial and final Hamiltonians -  matrices of size \(2N \times 2N\), and then multiply matrices of sizes \(2N \times 2N\) and \(N \times N\). Thus, the calculation of the Fisher zeros has the same computational complexity as matrix multiplication. 

The time evolution of the average values of a product of spin operators is discussed in detail in Ref.~\cite{barouch1971statistical2}. Let us outline the scheme. First, one expresses the spin-spin correlators between sites $n$ and $m$  in terms of chains of fermionic operators:
\begin{align}\label{eq:spincorr_th_ferm}
\begin{aligned}
    &\langle\sigma^x_m \sigma^x_n\rangle = \bra{\psi_0(t)} \sigma^x_m \sigma^x_n \ket{\psi_0(t)} = \\ 
    &\bra{\psi_0(t)} (c_m^\dagger + c_m) \exp(\pi i \sum_{m}^{n-1} c_i^\dagger c_i)(c_n^\dagger + c_n)\ket{\psi_0(t)} 
\end{aligned}
\end{align}
Second, using Wick's theorem we reduce the problem to the calculation of pfaffians of matrices constructed from pair-wise fermionic correlators:
\begin{equation}\label{eq:corrx_pfaff}
    \langle \sigma_m^x \sigma_n^x \rangle = \text{pf} \begin{pmatrix}
    \left(S^{mn}\right) & \left(G^{mn}\right) \\
    -\left(G^{mn}\right)^\dagger & \left(Q^{mn}\right)
    \end{pmatrix},
\end{equation}
with
\begin{align}
    \left(S^{mn}\right)_{ij} = \langle (c_i^\dagger - c_i)(c_j^\dagger - c_j) \rangle + \delta_{ij}, \; &l \leq i, j \leq m-1 \nonumber \\
    \left(Q^{mn}\right)_{ij} = \langle (c_i^\dagger + c_i)(c_j^\dagger + c_j) \rangle - \delta_{ij}, \; &l+1 \leq i, j \leq m \nonumber \\
    \left(G^{mn}\right)_{ij} = \langle (c_i^\dagger - c_i)(c_j^\dagger + c_j) \rangle, \phantom{-\delta_{ij}} \; &l \leq i \leq m-1 , \nonumber\\ &l+1 \leq j \leq m.\nonumber
\end{align}
For correlators along the \(x\)-axis, the size of the matrices \(S, Q, G\) is \(\abs{m - n}\). For the z-component correlators \(\langle \sigma_m^z \sigma_n^z \rangle\)  the formula is similar to \cref{eq:corrx_pfaff}, but the \(S, Q, G\) matrices' size is always ~\(2\) independently of \(m\) and \(n\).

Thus, for spins at a distance \(\abs{m - n} = d\) from each other, the calculation of \(x\)-oriented correlators is reduced to the calculation of the pfaffian of a \(2d \times 2d\) matrix. Such an operation has time asymptotics \(\mathcal{O}(d^3)\). For arbitrarily separated spins this comes to \(\mathcal{O}(N^3)\). Calculation of \(z\)-oriented correlators always requires calculation of pfaffians of \(4 \times 4\) matrices. Hence, it has only \(\mathcal{O}(1)\) asymptotics. 

To calculate these correlators at an arbitrary time we need to find \(c_i(t)\). To do so, we first we find the evolution of the eigenmodes \(\eta_\alpha (t) = \eta_\alpha (0) e^{-i t E_\alpha}\) and use \cref{eq:main_ham_fermions_e_bas} to find \(c_i(t)\).

\section{Main Results} \label{sec:main_res}
\begin{figure*}[!htb]
  \centering
    \includegraphics[width=0.8\textwidth]{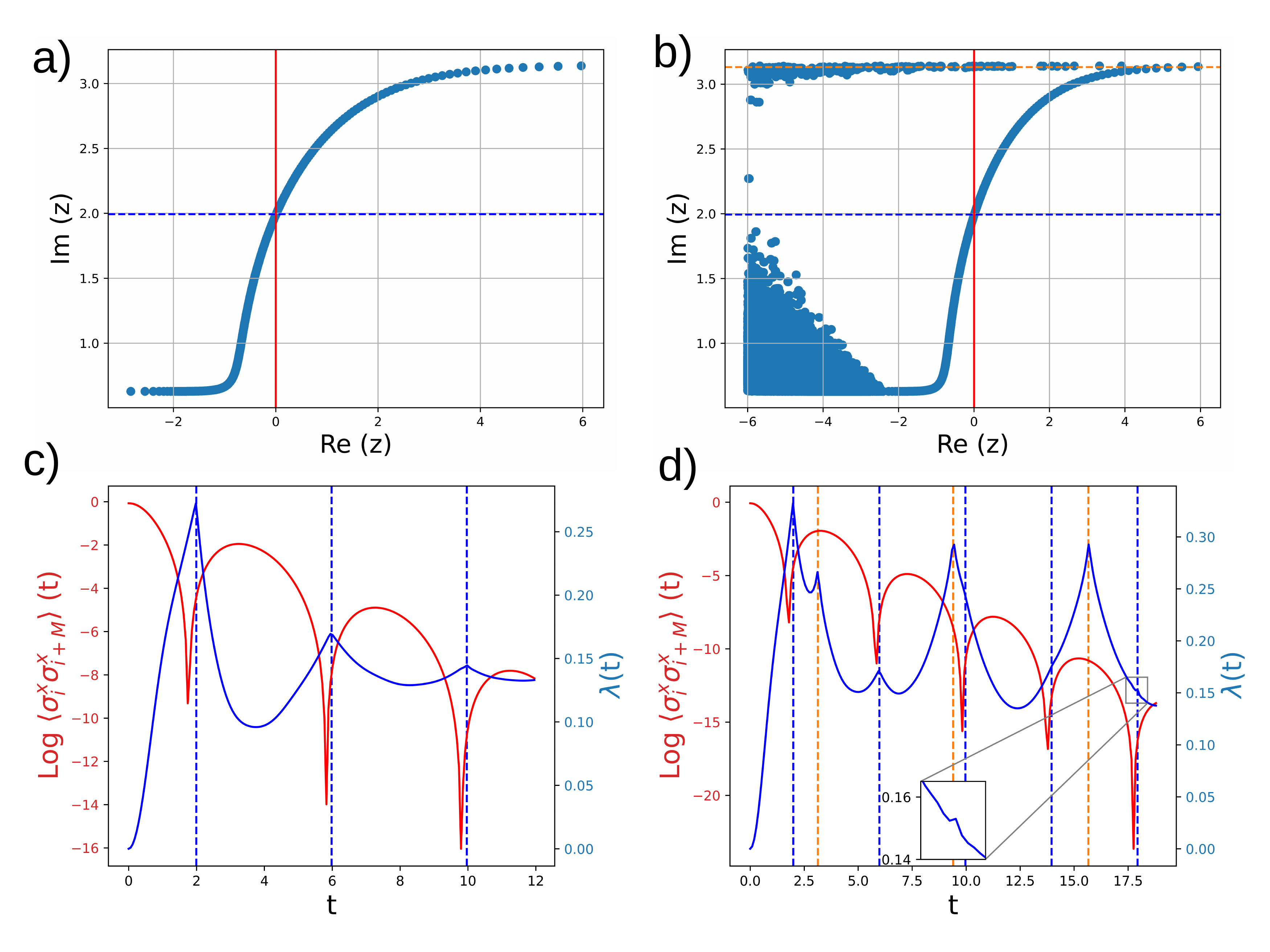}
    \caption{Fisher zeros (top row), Loschmidt rate function and spin-spin correlators (bottom row) for a quench from ferromagnetic phase (\(h^0 = 0.5\)) to paramagnetic phase (with the mean value of magnetic fields \(h^1 = 1.5\)). Chain length is \(N = 1000\). Plots on the left are calculated for a homogeneous external field \(h^1\), plots on the right - for a disordered \(h^1\) with the disorder strength \(D = 0.001\) - see \cref{eq:dis_distr_def}. Plots \textbf{(a), (b)} show Fisher zeros, calculated from \cref{eq:zeros_calc}. A DQPT is induced by a zero lying on the imaginary axis (red line). The period of such a DQPT is twice the imaginary component of said zero. With a homogeneous external magnetic field there is only one such zero (plot \textbf{(a)}), with a blue dashed line drawn on the level of the imaginary component of this zero \(t_1 = 2.0\). With a disordered external field there are two such zeros (plot \textbf{(b)}): one on the same level \(t_1 = 2.0\) also marked with a blue dashed line, and one at \(t_2 = 3.1\) marked with an orange dashed line. The plots \textbf{(c), (d)} show the logarithm of spin-spin correlators (spin-spin distance \(d = N / 2 = 500\)) in red and the Loschmidt rate function in blue. With a homogeneous external field (plot \textbf{(c)}), both spin-spin correlators and the Loschmidt rate function oscillate with a period \(2 \cdot t_1 = 4.0\) corresponding to the purely imaginary zero \(i \cdot t_1\) (plot \textbf{(a)}). With a disordered external field (plot \textbf{(d)}), Loschmidt rate function (blue) has non-analyticities with both periods \(2 \cdot t_1 = 4.0\) and \(2 \cdot t_2 = 6.2\) corresponding to the two imaginary zeros in the plot \textbf{(b)}. Spin-spin correlators still have non-analyticities with only one period \(2 \cdot t_1 = 4.0\).} 
    \label{fig:fisher_zeros_no_small}
\end{figure*}
In the weak disorder limit, the rate function develops periodically appearing kinks corresponding to two series of the DQPT as shown in \cref{fig:fisher_zeros_no_small}a. The first, \(t_1 = (\frac{1}{2} + n)t_1^*\) can be attributed to the ferromagnetic-paramagnetic equilibrium phase transition, as highlighted by the behavior of spin-spin correlators approaching zero at these times. The second, \(t_2 = (\frac{1}{2} + n)t_2^*\), corresponding through the \cref{eq:zeros_calc} to the lowest energies \(E_{min}^{-1}\) of \(\ham_1\) is more mysterious. 

These DQPTs do not correspond to an equilibrium phase transition in the disordered Ising chain \cite{sachdev1999quantum}. As we can see in ~\cref{fig:fisher_zeros_no_small} it does not alter the spin-spin correlators, reflecting that the transition is not connected to the order-disorder phase transition. As we shall see, it is difficult for almost all observables to trace this new phase transition. 

Another important insight from the behavior of spin-spin correlators is that the disorder-induced phase transition is not a topological one. The spin-spin correlators are mapped to the string order parameter for the corresponding Kitaev chain \cite{kitaev2001unpaired}. As we can see, the topological order parameter is not affected by the presence of the disorder.

To explain these phenomena, we analytically obtained the corrections to the matrix \(G_{\alpha\beta}\) from \cref{eq:psi_t_full}, and therefore for the post quench wave-function:

\begin{align}\label{eq:psi_d}
\begin{aligned}
    &\ket{\psi(t)}_D = \\  
    &\frac{1}{\mathcal{N}}\exp\left(\sum_{\alpha\beta}\eta_\alpha^\dagger(t) \left(G_{\alpha\beta} + \Delta G_{\alpha\beta}\right)\eta_\beta^\dagger(t)\right)\ket{vac}_{\eta}
\end{aligned}
\end{align}

Only the lowest energy part of the matrix \(G\) is extremely sensitive to disorder, with the rest of the terms being insensitive:

\begin{align}\label{eq:g_sensetiv}
    D = \Theta(N^{-3}) \text{ leads to } \Delta G_{\alpha\beta} = \Theta(1) \nonumber \\ \text{ at } \abs{E_\alpha - E_{min}}, \abs{E_\beta - E_{min}} = \mathcal{O}(N^{-2})
\end{align}

Physically, \cref{eq:g_sensetiv} might be interpreted as a substantial change in the BCS wave function for pairs formed by low-energy excitations when a very weak disorder is introduced. This sensitivity leads to the appearance of the \(t_2\) series of the DQPT at the disorder strength \(D = \Theta(N^{-3})\). For disorder strength \(D = o(N^{-3})\) there is no second series of DQPTs. Thus, we shall concentrate our attention on the threshold regime \(D = \Theta(N^{-3})\) in most cases. Numerical results suggest that our analytical results are still valid for larger disorder amplitudes see \cref{sec:app:finitedisorder}.

Next, we shall prove the lower bound for the change in fermionic correlators of the form:
\begin{equation}\label{eq:ferm_scale_intro}
    \Delta \langle c_i c_{i+n} \rangle = \mathcal{O}\left(N^{-1}\right).
\end{equation}
In \cref{sec:theory_correls}, we shall see that this means that the short-range spin-spin correlators \(\abs{i-j} \ll N\) follow the bound
\begin{equation} \label{eq:x_correl_change}
    \Delta \langle \sigma_i^x \sigma_j^x \rangle = \mathcal{O}(N^{-1}).
\end{equation}
And more generally, for \(n\)-spin correlators:
\begin{equation}\label{eq:ferm_scale_intro_many}
    \Delta \underbrace{\langle \sigma_{i1}^x \sigma_{i2}^x \dots \sigma_{in}^x \rangle}_{n \text{ spins}} = \mathcal{O}(N^{-1}),
\end{equation}
but only as long as the maximal distance \(d_{max}\) between any two spins in the correlator is independent of \(N\) and \(d_{max} \ll N\). Same is true for for \(z\)-oriented correlators. Thus, all local spin correlators change negligibly in the \(N \rightarrow \infty\) limit.

This leaves open the question as to whether the long-range spin-spin correlations might be used to witness the phase transition. In a large yet finite system, the long-range correlations can build up in finite time \(N/v_{LR}\), where \(v_{LR}\) is the Lieb-Robinson velocity \cite{lieb1972finite} of the system. Numerically we observe that the long-range spin-spin correlations are also insensitive to the disorder-induced DQPTs, see \cref{fig:fisher_zeros_no_small}. Although we do not give a rigorous proof, an intuitive argument can be given.

\begin{figure}[h!]
    \centering
    \includegraphics[width=0.4\textwidth]{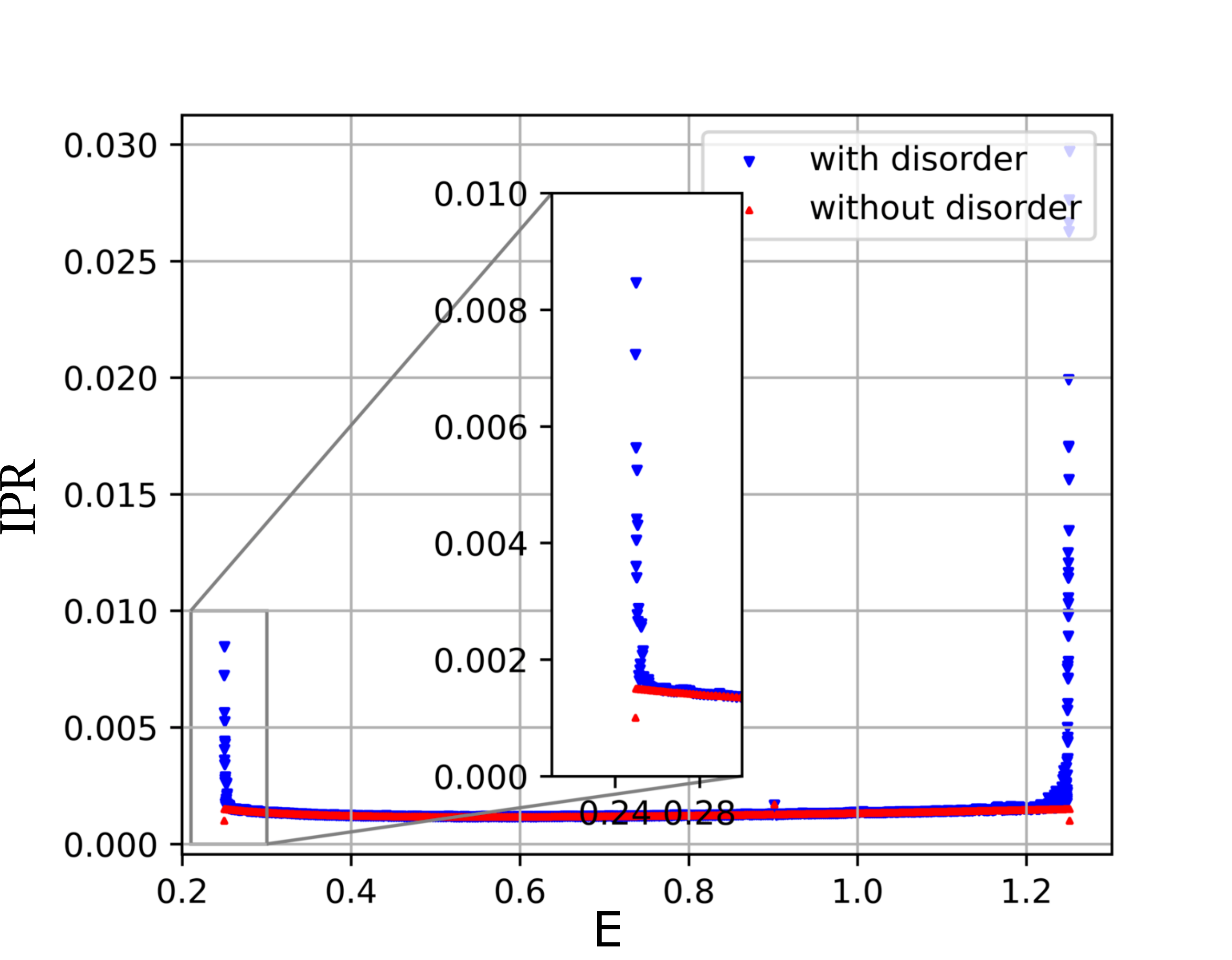}
    \caption{The plot shows the Inverse Participation Ratio (IPR) of eigenstates with a given energy. IPR is a measure of localization.  For an eigenstate \(\ket{\psi_\alpha} = \eta_\alpha^\dagger \ket{vac}_{\eta}\) with energy \(E_\alpha\) the IPR is calculated as \(IPR (E_\alpha) = \sum\limits_i \bra{\psi_\alpha} c_i^\dagger c_i\ket{\psi_\alpha}^2 = \sum\limits_i p_i^2\) where \(p_i\) is the probability for the particle to be on the \(n\)-th site. IPR is 1 for a fully localized state and \(1/N\) for a fully delocalized one.  Red dots show IPR for eigenstates in a homogeneous system. We can see, that all eigenstates are fully delocalized (\(IPR = 1 / 1000\)). Blue dots show IPR for eigenstates in a disordered system. The inset shows that in a disordered system low-energy states have higher IPR. Therefore, these states are localized.}
    \label{fig:ipr_population}
\end{figure}

From \cref{fig:ipr_population} we see that the low-energy excitations are localized, allowing us to rewrite \cref{eq:psi_d}:
\begin{align}
\begin{aligned}
    \ket{\psi(t)}_D &= \exp\left(\sum_{\alpha\beta}\eta_\alpha^\dagger(t) \left(\Delta G_{\alpha\beta}\right)\eta_\beta^\dagger(t)\right)\ket{\psi(t)} \\
    &= \sum_i R_i \ket{\psi(t)},
    \end{aligned}
\end{align}
where \(\ket{\psi(t)}\) is the post-quench wave function with \(D = 0\) (from \cref{eq:psi_t_full}), and \(R_i\) are local operators corresponding to the low-energy excitations. Thus one can expect that weak disorder does not affect the long-range correlations.
 
From a computational perspective, our findings allow for faster analysis of the LE and correlators in the presence of non-analyticities. Straightforwardly, we can see if the LE has non-analyticities by calculating all the Fisher zeros and checking if any of them approach the imaginary axis. But as was discussed beneath \cref{eq:zeros_calc}, this will require us to multiply matrices of size \(N \times N\), which takes \(\mathcal{O}(N^p) \, , \, 2 < p < 3\) operations. Furthermore, to check if logarithms of spin-spin correlators experience non-analyticities simply by calculating them requires \(\mathcal{O}(N^3)\) operations -- see \cref{eq:corrx_pfaff} and discussion below. 
In contrast, \cref{eq:gij_final_momentum} allows us to judge (for large system size N and small disorders) whether disorder creates additional non-analyticities in the Loschmidt echo and correlators just by looking at the lowest Fourier-components of the magnetic field \(\{h_i\}_{i=1}^N\). Their calculation takes only \(\mathcal{O}(N)\) operations.

\section{Theoretical explanation} \label{sec:theory_zeros}
\subsection{Physical picture}\label{subs:phys_pic}

In \cref{sec:main_res} we explained the observed changes in the Loschmidt echo and correlators by the proliferation of low-energy excitations. Now we qualitatively show why disorder in the transverse field is very effective in producing specifically the low-energy excitations. We shall start by sketching out the physical picture, and then explain the mathematical details in the later sections.

In a homogeneous system the density of excitations in the post-quench state can be expressed in terms of the matrix $G$ in momentum basis and is given by \cite{calabrese2012quantum}:
\begin{equation} \label{eq:gmat_meaning}
    \bra{\psi} \eta^\dagger_k \eta_k \ket{\psi} = \frac{\abs{G_{k,-k}}^2}{1 + \abs{G_{k,-k}}^2}.
\end{equation}
Here index \(k\) denotes quasi-momentum of an excitation. As shown in \cref{fig:gmat_on_k_clean}, \(\abs{G_{k, -k}}^2\) rapidly decreases from \(\Theta(N^2)\) at the lowest energies, to the values close to zero at all the other energies. This means that the low-energy modes \(\langle\eta_k^\dagger \eta_k\rangle \approx 1 \, , \, \abs{k} \approx \pi \, , \, E_k \approx E_{min}\) are most populated by the quench, while the other modes are not \(\langle\eta_k^\dagger \eta_k\rangle \approx 0 \, , \, \abs{k} \not\approx \pi\).

Suppose, we introduce a weak perturbation, which couples \(k\) modes separated by a momentum \(q\). The coupling is suppressed by the inverse energy difference as is usual in first-order perturbation theory. This means that energy levels set far apart (separated by large \(q\)) are coupled more weakly compared to closely lying energy levels.

Roughly speaking, we have only the lowest energy levels filled and they are mostly coupled only to each other by the perturbation. Thus, \(\langle\eta_k^\dagger \eta_{k+q}\rangle\) (and correspondingly \(G_{k, -k + q}\)) change significantly only for \(\abs{k} \sim \pi \, , \, \abs{q} \ll 1\). In the next subsection, we make this argument quantitative and conclude how it affects the positions of new Fisher zeros.

\subsection{General explanation} \label{subsec:gen_expl}
\begin{figure}[t!]
    \centering
    \includegraphics[width=0.4\textwidth]{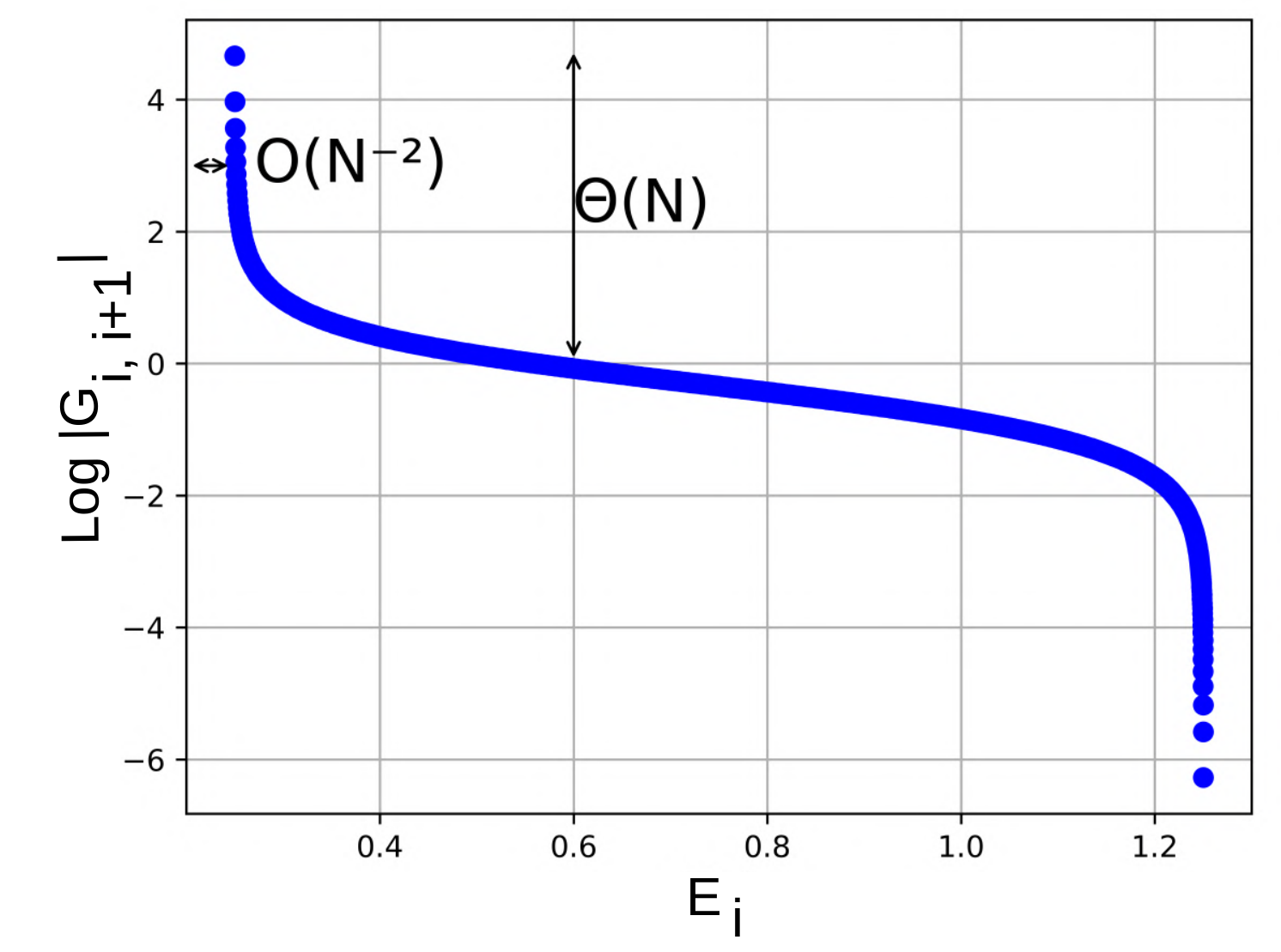}
    \caption{Dependence of the non-zero elements of the BCS matrix \(G\) on the corresponding energy, for a homogeneous external field; logarithmic scale. As explained in \cref{sec:g_shape}, in the homogeneous case the only non-zero elements of matrix \(G\) are \(G_{\alpha, \alpha+1} = - G_{\alpha+1, \alpha}\). These correspond to Cooper pairs of excitations $\eta$ with the same energy and opposite momenta. The dependency rapidly decreases with the energy. The number of pairs surges for the energy \(E_\alpha\) close to the lowest energy in the spectrum \(E_{min}\). For \(\abs{E_\alpha - E_{min}} = \mathcal{O}(N^{-2})\), we get \(G_{\alpha, \alpha+1} = \Theta(N)\) - see \cref{eq:g_asymptot}.}
    \label{fig:gmat_on_k_clean}
\end{figure}

In this section, we focus only on the case of weak disorder in the post-quench Hamiltonian and compare it to the case of homogeneous fields. Weak disorder causes additional non-analytic peaks in the Loschmidt echo. These are the direct consequence of an additional crossing of the imaginary line by Fisher zeros. In \cref{fig:fisher_zeros_no_small} we see that compared to the homogeneous case, the imaginary line is now crossed by a horizontal line of zeroes positioned above all other zeros. Remembering \cref{eq:zeros_calc} for the coordinates of the Fisher zeros,
\begin{equation}
    z_n (\alpha, \beta) = \frac{1}{E_\alpha + E_\beta} \left(\ln\abs{G_{\alpha\beta}}^2 + i(2n+1)\pi\right),
     \tag{\ref{eq:zeros_calc}}
\end{equation}
one can see that since this additional line of zeros is at the very top; it corresponds to the lowest energies. Here and throughout the article we use energy ordered indices, that is \(E_\alpha \leq E_\beta \Leftrightarrow \alpha \geq \beta\). Since this line begins on the very left and intersects the imaginary line, real parts of the corresponding zeros must change from a large negative value (\(-\infty\) in thermodynamic limit) to at least \(0\). Looking at \cref{eq:zeros_calc}, we can see that \(\abs{G_{\alpha\beta}}\) changes from almost zero (\(0\) in thermodynamic limit) to at least \(1\). This observation and those made in \cref{subsec:gen_expl} leads us to the following theorem:
\begin{theorem}\label{theorem:main}
    Suppose the post-quench Hamiltonian contains disorder with the amplitude vanishing as fast as \(D = \Theta(N^{-3})\). Then 
    \begin{enumerate}
        \item The change in spectrum \(E_\alpha\) is vanishingly small in the limit \(N \rightarrow \infty\).
        \item \(G_{\alpha\beta}\) changes in the following way. New entries appear with absolute values up to and inclusive \(\approx 1\); changes in the old entries vanish in the limit \(N \rightarrow \infty\). These new entries correspond to energies \(E_\alpha\) at the bottom of the spectrum, they are in the bottom-right corner of the matrix \(G\)).
    \end{enumerate}
\end{theorem}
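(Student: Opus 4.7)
The plan is to treat the disorder $V = -2\sum_i \delta_i\, c_i^\dagger c_i$ as a perturbation of the clean post-quench Hamiltonian and push it through the same Bogoliubov machinery that produced $G$ in \cref{sec:model}. For part (1), the spectrum, the argument is immediate: in the single-particle (Bogoliubov--de Gennes) sector $V$ is diagonal in the site basis with entries bounded by $2D$, so its operator norm satisfies $\|V\|_{\mathrm{sp}} \le 2D$. Weyl's inequality then gives $|\tilde E_\alpha - E_\alpha| \le 2D = O(N^{-3}) \to 0$ uniformly in $\alpha$, and no eigenvector information is required.

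For part (2), I would combine first-order perturbation theory for the post-quench eigenmodes with the explicit relation $G = -W_1^{-1}W_2$ of \cref{sec:model}. Writing the disordered eigenmodes as $\tilde\eta_\alpha = \eta_\alpha + \sum_{\beta\ne\alpha} c_{\alpha\beta}\,\eta_\beta + (\text{anomalous pair terms})$ with $|c_{\alpha\beta}| \lesssim |V_{\alpha\beta}|/|E_\alpha - E_\beta|$, and re-expressing the clean BCS exponent $\sum_k G_{k,-k}\,\eta_k^\dagger \eta_{-k}^\dagger$ in the perturbed basis, one reads off that new off-block entries appear with magnitude $|\tilde G_{k',-k}| \sim |c_{k,k'}\, G_{k,-k}|$, while the old diagonal-stripe entries receive only second-order corrections $\sim |c|^2\, G_{\mathrm{clean}}$ (the would-be first-order corrections on the diagonal reshuffle only \emph{between} distinct clean pairs and cancel on the diagonal itself, as a direct two-pair computation shows). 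The only large seed is the band-bottom stripe of the clean matrix, where \cref{fig:gmat_on_k_clean} recorded $|G_{\alpha,\alpha+1}| = \Theta(N)$ inside the window $|E_\alpha - E_{min}| = O(N^{-2})$; outside this window both the seed and the energy denominators are $\Theta(1)$ and no significant transfer of amplitude occurs. Inside the window, the worst-case bound $|V_{\alpha\beta}| \le 2D$ (from Cauchy--Schwarz on site-basis wavefunctions) together with the level spacing $\Delta E \sim N^{-2}$ gives $|c_{\alpha\beta}| \lesssim DN^2$, so new off-block entries are $\lesssim DN^3$ while corrections to old entries are $\lesssim D^2N^5$. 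At the threshold $D = \Theta(N^{-3})$ these become $\Theta(1)$ and $O(N^{-1})$ respectively, which is exactly the content of (2): new entries of magnitude up to $\approx 1$ in the bottom-right corner, and $o(1)$ shifts of all other entries.

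The main obstacle is making the perturbative step rigorous inside the low-energy window, where $|V_{\alpha\beta}|/\Delta E$ is not small even at the threshold and non-degenerate perturbation theory is not justified. The fix I would use is degenerate/Brillouin--Wigner perturbation theory: project $V$ onto the $O(1)$-dimensional subspace $\{\alpha: |E_\alpha - E_{min}| = O(N^{-2})\}$ and diagonalize that small random matrix exactly, while treating the coupling to the rest of the spectrum genuinely perturbatively (those energy denominators are $\Theta(1)$). It then remains to verify that the exact rotation on the low-energy subspace, when convolved with the $\Theta(N)$ seed $G_{\alpha,\alpha+1}$, still produces entries of $\tilde G$ bounded by $\Theta(1)$ rather than anything larger --- i.e.\ that no additional amplification arises beyond the one captured by the perturbative estimate. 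This is precisely the step that pins the threshold at $D = \Theta(N^{-3})$ rather than some larger negative power of $N$, and it rests on the fact that the restriction of $V$ to the low-energy subspace is a random matrix whose singular values are themselves bounded by $\Theta(D)$.
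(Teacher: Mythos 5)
Your route for part (2) is essentially the paper's: a first-order Bogoliubov rotation $a$ with $|a_{\alpha\beta}|\sim|V_{\alpha\beta}|/|E_\alpha-E_\beta|$, the transfer identity $\Delta G\approx aG-Ga^{*}$, the $\Theta(N)$ seed $|G_{\alpha,\alpha+1}|$ in the window $|E_\alpha-E_{min}|=\mathcal{O}(N^{-2})$, and the $\Theta(N^{-2})$ level spacing there; your numbers ($\Delta G\lesssim DN^{3}$ for the new entries, old entries untouched at first order because the rotation has no diagonal part) reproduce \cref{eq:dG_matform} and the estimates that follow it. Two genuine differences are worth noting. First, for part (1) you invoke Weyl's inequality on the Bogoliubov--de Gennes matrix; this is cleaner and fully rigorous compared with the paper's linearized $\Delta E_\alpha = 2(V_d)_{\alpha\alpha}$, and needs no eigenvector control. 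Second, you skip the bookkeeping the paper does to relate the two BCS representations of the \emph{same} initial state: the vacua of $\eta$ and $\tilde\eta$ differ, and the paper controls this via the Thouless matrix $T=(1+a_1)^{-1}a_2=\mathcal{O}(N^{-3})$ together with the Baker--Campbell--Hausdorff commutator estimate of \cref{sec:commut_asympt}. Your ``re-express the clean exponent in the perturbed basis'' silently assumes these anomalous ($a_2$-type) contributions are negligible; they are, but that has to be checked, since $a_2$ multiplies the same $\Theta(N)$ seed.

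The one substantive flaw is your diagnosis of the ``main obstacle.'' You claim that inside the low-energy window $|V_{\alpha\beta}|/\Delta E$ is not small at threshold, so that non-degenerate perturbation theory fails and a Brillouin--Wigner treatment of the low-energy block is required. Your own estimate contradicts this: $|V_{\alpha\beta}|\le 2D=\Theta(N^{-3})$ and $\Delta E=\Theta(N^{-2})$ give $|c_{\alpha\beta}|\lesssim DN^{2}=\Theta(N^{-1})\to 0$. The rotation is uniformly small; the $\Theta(1)$ change in $G$ comes entirely from a small rotation acting on a $\Theta(N)$ seed, not from a large rotation, which is exactly why the paper's first-order treatment (\cref{eq:a_perturb_general,eq:a_asymptot}) suffices. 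The only genuine degeneracy, $E_k=E_{-k}$, mixes the two members of a single Cooper pair and changes $G_{k,-k}$ only by a phase, so it does not threaten the estimate either (a point neither you nor the paper spells out). The Brillouin--Wigner detour is therefore harmless but solves a non-problem, and the threshold $D=\Theta(N^{-3})$ is pinned by the seed size and level spacing alone.
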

If we add weak disorder to the driving Hamiltonian \(\ham_1\) (see \cref{eq:quench,eq:dis_distr_def})
\begin{equation}\label{eq:disord_cond}
    \ham_1 \longrightarrow \Tilde{\ham}_1 = \ham_1 + \mathcal{V} \; ,\; \frac{\norm{\mathcal{V}}}{\norm{\ham_1}} = \nu = \Theta({N^{-3}}) \ll 1,
\end{equation} it leads to corrections to the Fisher zeros' coordinates through the corrections in \(G_{\alpha\beta}\) and \(E_\alpha\). Using perturbation theory we write
\begin{equation}
\begin{aligned} \label{eq:g_e_corrections}
    \Delta{G}_{\alpha\beta} &= \mathcal{D}(G)_{\alpha\beta\gamma\delta} V_{\gamma\delta} + \mathcal{O}(\nu^2),\\
    \Delta{E}_\alpha &= \mathcal{D}(E)_{\alpha\gamma\delta} V_{\gamma\delta}  + \mathcal{O}(\nu^2).
\end{aligned}
\end{equation}
Here, at first glance it may seem that at vanishingly small \(\nu\), corrections to the coordinates of Fisher zeros in \cref{eq:zeros_calc} would also be vanishingly small. That is, even in the presence of disorder there would still be one line of zeros, as in the absence of disorder in \cref{fig:fisher_zeros_no_small} (a). We know this is not the case: see \cref{fig:fisher_zeros_no_small} (b), demonstrating the actual Fisher zeros in the presence of weak disorder. This seeming contradiction is explained by the fact that some of the coefficients \(\mathcal{D}(G)_{\alpha\beta\gamma\delta}\) in \cref{eq:g_e_corrections} diverge with the system size \(N\). 

As we shall see below, the reason for this divergence is the singular behavior of \(G_{\alpha\beta}\), shown in \cref{fig:gmat_on_k_clean}. In the homogeneous case, the only non-zero terms of \(G\) are \(G_{k,-k}\) (here we enumerate \(G\) with quasi-momenta instead of the energy-indices). Moreover, as we demonstrate in \Crefrange{eq:g_sq_exact}{eq:g_approx_deriv}
\begin{align}\label{eq:g_asymptot_k}
 \abs{G_{k,-k}} = \Theta(N)\, , \, \abs{k \pm \pi} = \mathcal{O}(N^{-1}); \nonumber \\ \abs{G_{k,-k}} = \mathcal{O}(1) \text{, other } k.  
\end{align}
 With \(k\) moving away from \(\pi\), \(G_{k,-k}\) decreases as \(\frac{1}{k-\pi}\). In our usual convention with $G$ indexed in the energy basis, this reads:
 \begin{equation}
 \tag{\theequation'}
    \begin{aligned}\label{eq:g_asymptot}
 \abs{G_{\alpha+1,\alpha}} = \abs{G_{\alpha,\alpha+1}} = \Theta(N)\, , \, \abs{E_\alpha - E_{min}} = \mathcal{O}(N^{-2})  \\ \abs{G_{\alpha+1,\alpha}} = \abs{G_{\alpha,\alpha+1}} = \mathcal{O}(1) \text{, for other } E_\alpha.
 \end{aligned}
  \end{equation}
\subsection{Proof of the Theorem 1}

We do not attempt to compute all the coefficients \(\mathcal{D}(G)_{\alpha\beta\gamma\delta}, \mathcal{D}(E)_{\alpha\gamma\delta}\). Instead, we focus on those that are divergent. Moreover, they should diverge fast enough to compensate for \(\nu = \Theta(N^{-3})\). Thus, we will disregard all the coefficients less than \(\Theta(N^3)\), as they will account for vanishing corrections in the \(N \rightarrow \infty\) limit. 

In the following, we obtain explicit expressions for \(\Delta E_\alpha\) (\cref{eq:energy_change}) and for \(\Delta G_{\alpha\beta}\) (\cref{eq:dG_matform}) and show that \(\mathcal{D}(E)_{\alpha\gamma\delta}\) are never divergent, while \(\mathcal{D}(G)_{\alpha\beta\gamma\delta}\) are divergent as \(\Theta(N^{-3})\) for the indices \(\alpha, \beta\) corresponding to the lowest energies.

\begin{proof}
The Bogolyubov operators \(\eta\) corresponding to the Hamiltonian \(\ham_1\) without disorder can be rewritten in terms of \(\Tilde{\eta}\), which are the Bogolyubov operators corresponding to the disordered Hamiltonian, \(\widetilde{\ham} = (\ham_1 + V)\). Thus, we equate the Hamiltonians written in terms of \(\Tilde{\eta}\) and \(\eta\):
\begin{equation}
\begin{aligned}
    \begin{pmatrix}
    \eta \\
    \eta^\dagger
    \end{pmatrix}^\dagger
    \left[
    \frac{1}{2}\begin{pmatrix}
    E & 0 \\
    0 & -E
    \end{pmatrix}
    +
    V
    \right]
    &\begin{pmatrix}
    \eta \\
    \eta^\dagger
    \end{pmatrix}&~=  \\ \nonumber
    \begin{pmatrix}
    \Tilde{\eta} \\
    \Tilde{\eta}^\dagger
    \end{pmatrix}^\dagger
    \frac{1}{2}\begin{pmatrix}
    \widetilde{E} & 0 \\
    0 & -\widetilde{E}
    \end{pmatrix}
    &\begin{pmatrix}
    \Tilde{\eta} \\
    \Tilde{\eta}^\dagger
    \end{pmatrix}.
\end{aligned}
\end{equation}
where \(E, \widetilde{E}\) are diagonal matrices with energies of \(\ham_1, \widetilde{\ham}\) on their diagonals: \(E_{\alpha \alpha} = E_\alpha, \; \widetilde{E}_{\alpha \alpha} = \widetilde{E}_\alpha\). 

Sets of Bogolyubov operators for different Hamiltonians are connected by a unitary transform \(A\):
\begin{equation}
    \begin{pmatrix}\label{eq:A_def}
    \Tilde{\eta} \\
    \Tilde{\eta}^\dagger
    \end{pmatrix} = 
    A \begin{pmatrix}
    \eta \\
    \eta^\dagger
    \end{pmatrix}.
\end{equation}
From this we deduce:
\begin{equation} \label{eq:eta_transform_full}
    \begin{pmatrix}
    E & 0 \\
    0 & -E
    \end{pmatrix}
    +
    2V = A^\dagger \begin{pmatrix}
    \widetilde{E} & 0 \\
    0 & -\widetilde{E}
    \end{pmatrix} A.
\end{equation}

In \cref{eq:eta_transform_full} \(V\) represents a weak perturbation. Thus, \(A = 1 + a\) and \(\Tilde{E} = E + \Delta E\), where \(a\) and \(\Delta E\) have norms of the order of \(V\). Linearizing \cref{eq:eta_transform_full} we obtain:
\begin{equation}
    2V - \begin{pmatrix}
    \Delta E & 0 \\
    0 & -\Delta E
    \end{pmatrix} = a^\dagger  \begin{pmatrix}
    E & 0 \\
    0 & -E
    \end{pmatrix} +  \begin{pmatrix}
    E & 0 \\
    0 & -E
    \end{pmatrix} a.
\end{equation}

In the linear approximation, using the unitarity of \(A\), one can show that \(a^\dagger \approx -a\). Denoting the diagonal components of the disorder potential as \(V_d\) and its non-diagonal components as \(V_{nd}\), we derive:
\begin{equation}\label{eq:energy_change}    
    2 V_d =
\begin{pmatrix}
     \Delta E & 0 \\
    0 & -\Delta E
    \end{pmatrix},    
\end{equation}
which implies that:
\begin{equation}
\mathcal{D}(E)_{\alpha\gamma\delta} = 2\delta_{\alpha\gamma}\delta_{\gamma\delta} \tag{\theequation'}.
\end{equation}
For the off-diagonal terms we obtain:
\begin{equation}\label{eq:a_e_commutator}
    2V_{nd} = \left[ \begin{pmatrix}
    E & 0 \\
    0 & -E
    \end{pmatrix}, a\right].
\end{equation}
We rewrite \(V_{nd}\) and \(a\) in a block-matrix form, using the hermiticity of \(\mathcal{V}\) and the symmetries of \(a\) inherited from \cref{eq:A_def}:
\begin{equation}
    a = \begin{pmatrix}
    a_1 & a_2 \\
    a_2^* & a_1^*
    \end{pmatrix} \; ,\quad 2V_{nd} = 2\begin{pmatrix}
    V_1 & V_2 \\
    -V_2^* & -V_1^*.
    \end{pmatrix}
\end{equation}
Then \cref{eq:a_e_commutator} is equivalent to the system:
\begin{equation}
    \left[E, a_1\right] = V_1 \; , \quad  \left\{E, a_2\right\} = V_2.
\end{equation}
Using the fact that \(E\) is diagonal we come to: 
\begin{equation}\label{eq:a_perturb_general}
\begin{aligned}
    (a_1)_{\alpha\beta} &= \frac{2(V_1)_{\alpha\beta}}{E_{\alpha} - E_{\beta}} \textup{ for } \alpha \neq \beta \\
    (a_2)_{\alpha\beta} &= \frac{2(V_2)_{\alpha\beta}}{E_{\alpha} + E_{\beta}}.
\end{aligned}
\end{equation}
From \cref{eq:disord_cond} \((V_1)_{\alpha\beta} , (V_2)_{\alpha\beta} = \Theta(N^{-3})\). The spectrum is gapped, thus \(E_{\alpha} + E_{\beta} = \Theta(1)\). For close energy levels with \(\abs{\alpha - \beta} \sim 1\), their difference \(E_{\alpha} - E_{\beta} = \Theta(N^{-1})\) if the spectrum has a non-zero derivative at \(E_\alpha \approx E_\beta\) as a function of momentum. If only the second derivative is non-zero, then \(E_{\alpha} - E_{\beta} = \Theta(N^{-2})\) -- this is the case at \(E_\alpha, E_\beta \approx E_{min} = E_N\), as we shall see (c.f. \cref{eq:e_diff_deriv}). Consequently, from  \cref{eq:a_perturb_general} we find asymptotics for the maximal values of corresponding matrices:
\begin{equation}
\begin{aligned}\label{eq:a_asymptot}
    \max\limits_{\alpha\beta} (a_1)_{\alpha\beta} &= \Theta(N^{-1}) \, \text{, achieved at } \alpha, \beta \approx N, \\ 
    \max\limits_{\alpha\beta} (a_2)_{\alpha\beta} &= \Theta(N^{-3}).
\end{aligned}
\end{equation}
The initial state of the fermionic system can be written in terms of the modes and the vacuum state corresponding to both the Hamiltonian without disorder and with it. That is:
\begin{equation}\label{eq:init_state_two_ways}
    \exp(\Tilde{\eta}_\alpha^\dagger \Tilde{G}_{\alpha\beta} \Tilde{\eta}_\beta^\dagger)\ket{vac}_{\Tilde{\eta}} = \exp({\eta}_\alpha^\dagger {G}_{\alpha\beta} {\eta}_\beta^\dagger)\ket{vac}_{\eta}.
\end{equation}
The matrix \(T\) transforming one vacuum to the other can be written as:
\begin{equation}
\begin{aligned} \label{eq:t_mat_expr}
     \exp(\Tilde{\eta}_\alpha^\dagger T_{\alpha\beta} \Tilde{\eta}_\beta^\dagger)\ket{vac}_{\Tilde{\eta}} =\ket{vac}_{\eta} \\  T = (1 + a_1)^{-1} a_2 = \mathcal{O}(N^{-3}).
\end{aligned}
\end{equation}
Below we use the following notation:
\begin{equation}
\begin{aligned} \label{eq:gmat_shorts}
    \mathcal{T} = \Tilde{\eta}_\alpha^\dagger T_{\alpha\beta} \Tilde{\eta}_\beta^\dagger \; , \; \mathcal{G} = {\eta}_\alpha^\dagger {G}_{\alpha\beta} {\eta}_\beta^\dagger \\  \mathcal{G}_0 = \Tilde{\eta}_\alpha^\dagger {G}_{\alpha\beta} \Tilde{\eta}_\beta^\dagger \; , \; \mathcal{G}_1 = \Tilde{\eta}_\alpha^\dagger {\Tilde{G}}_{\alpha\beta} \Tilde{\eta}_\beta^\dagger,
\end{aligned}
\end{equation}
which allows us to rewrite \cref{eq:init_state_two_ways} as:
\begin{equation}
    \exp(\mathcal{G}_1) = \exp(\mathcal{G})\exp(\mathcal{T}).
\end{equation}
Our task as to find the change in $\mathcal{G}$:
\begin{equation}
    \Delta \mathcal{G} = \mathcal{G}_1 - \mathcal{G}_0.
\end{equation}

Expressing the right hand side of \cref{eq:init_state_two_ways} in terms of the \(\Tilde{\eta} \) operators we obtain:

\begin{equation}\label{eq:g_transf_full_matrix}
    \eta_\alpha^\dagger {G}_{\alpha\beta} {\eta}_\beta^\dagger = \begin{pmatrix}
    \Tilde{\eta} \\
    \Tilde{\eta}^\dagger
    \end{pmatrix}^T \begin{pmatrix}
    a_2 \\ 1 + a_1
    \end{pmatrix} {G} \begin{pmatrix}
    -a_2^* & 1 - a_1^*
    \end{pmatrix}\begin{pmatrix}
    \Tilde{\eta} \\
    \Tilde{\eta}^\dagger
    \end{pmatrix}.
\end{equation}

Remembering \cref{eq:g_asymptot,eq:a_asymptot}, we extract the only part of \cref{eq:g_transf_full_matrix} that is non-vanishing with \(N \rightarrow \infty\):
\begin{gather}\label{eq:g_mat_appr_transform}
\mathcal{G} = \eta_\alpha^\dagger {G}_{\alpha\beta} {\eta}_\beta^\dagger = \Tilde{\eta}^\dagger {G} \Tilde{\eta}^\dagger + \Tilde{\eta}^\dagger G (-a_2)^*\Tilde{\eta} + \Tilde{\eta}a_2 G \Tilde{\eta}^\dagger + \\ \nonumber
\Tilde{\eta}^\dagger G (-a_1)^* \Tilde{\eta}^\dagger + \Tilde{\eta}^\dagger a_1 G \Tilde{\eta}^\dagger + \Tilde{\eta} a_2 G (-a_2^*) \Tilde{\eta} + \Tilde{\eta} a_2 G (-a_1^*)\Tilde{\eta}^\dagger + \\\nonumber
    \Tilde{\eta}^\dagger a_1 G (-a_2)^* \Tilde{\eta} + \Tilde{\eta}^\dagger a_1 G (-a_1)^* \Tilde{\eta}^\dagger \approx \mathcal{G}_0 + \\\nonumber
    \Tilde{\eta}^\dagger G (-a_1)^* \Tilde{\eta}^\dagger + \Tilde{\eta}^\dagger a_1 G \Tilde{\eta}^\dagger. 
\end{gather}
Using the Baker–Campbell–Hausdorff formula on the RHS of \cref{eq:init_state_two_ways}, we obtain in the exponent:
\begin{equation}\label{eq:bch_full}
\mathcal{G}+\left(\mathcal{T} + {\frac {1}{2}}[\mathcal{G},\mathcal{T}]+{\frac {1}{12}}[\mathcal{G},[\mathcal{G},\mathcal{T}]]-{\frac {1}{12}}[\mathcal{T},[\mathcal{G},\mathcal{T}]]+\cdots \right)
\end{equation}
In \cref{sec:commut_asympt} we show that the sum of the commutator series in brackets in \cref{eq:bch_full} scales as \(N^{-3}\), so now we neglect it. Then, substituting the approximation for \(\mathcal{G}\) from \cref{eq:g_mat_appr_transform} into \cref{eq:bch_full}, we arrive at:
\begin{equation}
\begin{aligned}\label{eq:equiv_on_vecs}
    \exp(\mathcal{G}_1)&\ket{vac}_{\Tilde{\eta}} \approx \\ 
    &\exp(\mathcal{G}_0 + \Tilde{\eta}^\dagger G (-a_1)^* \Tilde{\eta}^\dagger + \Tilde{\eta}^\dagger a_1 G \Tilde{\eta}^\dagger)\ket{vac}_{\Tilde{\eta}}.
\end{aligned}
\end{equation}
Comparing coefficients in front of the two-particle states in \cref{eq:equiv_on_vecs}, we obtain:
\begin{equation}\label{eq:gmat_corr_shrtfrm}
    \Delta \mathcal{G} = \Tilde{\eta}^\dagger G (-a_1)^* \Tilde{\eta}^\dagger + \Tilde{\eta}^\dagger a_1 G \Tilde{\eta}^\dagger + \mathcal{O}(N^{-2}),
\end{equation}
or in the matrix form:
\begin{equation}\label{eq:dG_matform}
\Delta G_{\alpha\beta} = \Tilde{G}_{\alpha\beta} - G_{\alpha\beta} = (a_1 G - G a_1^*)_{\alpha\beta} + \mathcal{O}(N^{-2}).
\end{equation}
From \cref{eq:a_asymptot} we remember \((a_1)_{\alpha\beta} = \Theta(N^{-1})\) and from \cref{eq:g_asymptot} \(\abs{G_{\alpha,\alpha+1}} = \Theta(N)\), both at the lowest energies (\(\alpha, \beta \approx N\)). Thus, \(\Delta G_{\alpha\beta} = \Theta(1)\) at the lowest energies. This means that for an infinitesimal change in the Hamiltonian \(\mathcal{V}\) we obtained a finite change in \(G_{\alpha\beta}\) for \(\alpha, \beta \approx N\). Additionally, in the second statement of \cref{theorem:main} we claimed that the initially non-zero entries of \(G\) do not change in the thermodynamic limit. This follows from \cref{eq:dG_matform} and the absence of diagonal entries in matrix \(a_1\). The latter is a consequence of \cref{eq:a_perturb_general} and the definition of \(V_1\) as the block of \textit{off-diagonal} part of the perturbation \(\mathcal{V}\). Thus, the second statement of \cref{theorem:main} is proved.

Finally, from \cref{eq:energy_change} and the asymptotics we chose for \(\mathcal{V}\) in \cref{eq:disord_cond}, it follows that the change in the eigenenergies vanishes in the \(N \rightarrow \infty\) limit. This proves the first statement of \cref{theorem:main} and finishes the proof.
\end{proof}

\subsection{Application of \cref{theorem:main} to Harmonic Perturbation}\label{sec:appl_harmonic}
In \cref{subsec:gen_expl} we gave a qualitative picture of the change in Fisher zeros for any weak perturbation \(\mathcal{V}\). In this section, we work with a specially chosen \(\mathcal{V}\) to obtain some quantitative results. We consider a perturbation of the form:
\begin{equation}\label{eq:disord_q_def}
    \mathcal{V} = 
    \sum_{j \in [-N/2\dots N/2]} \sum_{\substack{q \in BZ \\q = \mathcal{O}(N^{-1})}} b_q \cos(q j) \sigma_j^z,
\end{equation}
where we chose to sum only over small momenta in the Brillouin zone. This is equivalent to the following perturbation in the fermionic model:
\begin{equation}
    \mathcal{V} = \sum_{j, q} 2 b_q \cos(q j) c_j^\dagger c_j.
\end{equation}
The fourier transform of a single mode is:
\begin{equation} \label{eq:v_fourier}
    \mathcal{V}_q = b_q \sum_k c_k^\dagger c_{k+q} + c_k^\dagger c_{k-q} = \sum_k \mathcal{V}_{k,q},
\end{equation}
or in terms of the Bogolyubov operators:
\begin{equation}
    c_k = \cos(\theta_k / 2) \eta_k + i \sin(\theta_k / 2) \eta_{-k}^\dagger,
\end{equation}
which upon substitution to \cref{eq:v_fourier} give us:
\begin{align} \nonumber
    &\frac{\mathcal{V}_{k,q}}{b_q} = (\cos(\theta_k / 2) \eta_k^\dagger - i \sin(\theta_k / 2) \eta_{-k})(\cos(\theta_{k+q} / 2) \eta_{k+q}\\
    \nonumber
   &+i \sin(\theta_{k+q} / 2)\eta_{-k-q}^\dagger) +  (\cos(\theta_k / 2) \eta_k^\dagger - i \sin(\theta_k / 2) \eta_{-k})\\
    &\times(\cos(\theta_{k-q} / 2) \eta_{k-q} + i \sin(\theta_{k-q} / 2) \eta_{-k+q}^\dagger).
    \label{eq:v_bogolyub}
\end{align}
The matrix elements of the perturbation in the energy basis \(V_{\alpha\beta}\) are non-zero only for energies separated by a momentum \(q\). Also, we require that \(q = \mathcal{O}(N^{-1})\), therefore our perturbation fulfills the scaling for \(a_1\) from the \cref{eq:a_asymptot}. Thus, we may use the approximate expression for corrections to \(G\) shown in \cref{eq:dG_matform}:
\begin{equation}
    \Delta{G} \approx G (-a_1)^* + a_1 G.
\end{equation}
Using \cref{eq:a_perturb_general} and the structure of \(G\) in the eigenbasis of \(\ham_1\):
\begin{equation}
\begin{aligned}\label{eq:delta_g_main}
    \Delta G_{\alpha\beta} \approx G_{\alpha \pm 1, \alpha} (-a_1)^*_{\alpha\beta} + (a_1)_{ij} G_{\beta, \beta \pm 1} \\ = 2\frac{(V_1)_{\alpha\beta} G_{\beta, \beta \pm 1} - (V_1)_{\alpha\beta}^* G_{\alpha, \alpha \pm 1}}{E_\alpha - E_\beta}, 
\end{aligned}
\end{equation}
where \(G_{\alpha, \alpha \pm 1}\) denotes \(G_{\alpha, \alpha+1}\) if \(i\) is odd and \(G_{\alpha, \alpha - 1}\) if \(i\) is even (see \cref{sec:g_shape}). Now we will derive the explicit form of the asymptotics of the \cref{eq:delta_g_main} for \(k \rightarrow \pi\) and the initial field \(h^0 = 0\) and post-quench field with a mean value \(h^1 = h\): 
\begin{align}\label{eq:g_sq_exact}
    \abs{G_{k,-k}}^2 &= \frac{1 - \cos(\theta_k - \Tilde{\theta}_k)}{1 + \cos(\theta_k - \Tilde{\theta}_k)}, \tan \theta_k = \frac{\sin k }{h + \cos k},\\
    \label{eq:theta_approx}
    \theta_k &\approx \frac{k - \pi}{1 - h}      \theta_k - \Tilde{\theta}_k \approx \pi + (k - \pi) \frac{h}{h - 1},\\
    \label{eq:g_diag_exact}
    \abs{G_{k,-k}}^2 &\approx \frac{1 + \cos((k - \pi) \frac{h}{h - 1})}{1 - \cos((k - \pi) \frac{h}{h - 1})} \approx \frac{4}{\left((k - \pi) \frac{h}{h - 1}\right)^2}.
\end{align}
For \(h > 1\) and \(k < \pi\) this gives:
\begin{equation}\label{eq:g_approx_deriv}
    \abs{G_{\alpha,\alpha+1}} = \abs{G_{\alpha+1,\alpha}} = \frac{2(h-1) / h}{\pi - k_\alpha} = \Theta(N),
\end{equation}
\begin{figure*}[!htb]
  \centering
    \includegraphics[width=0.8\textwidth]{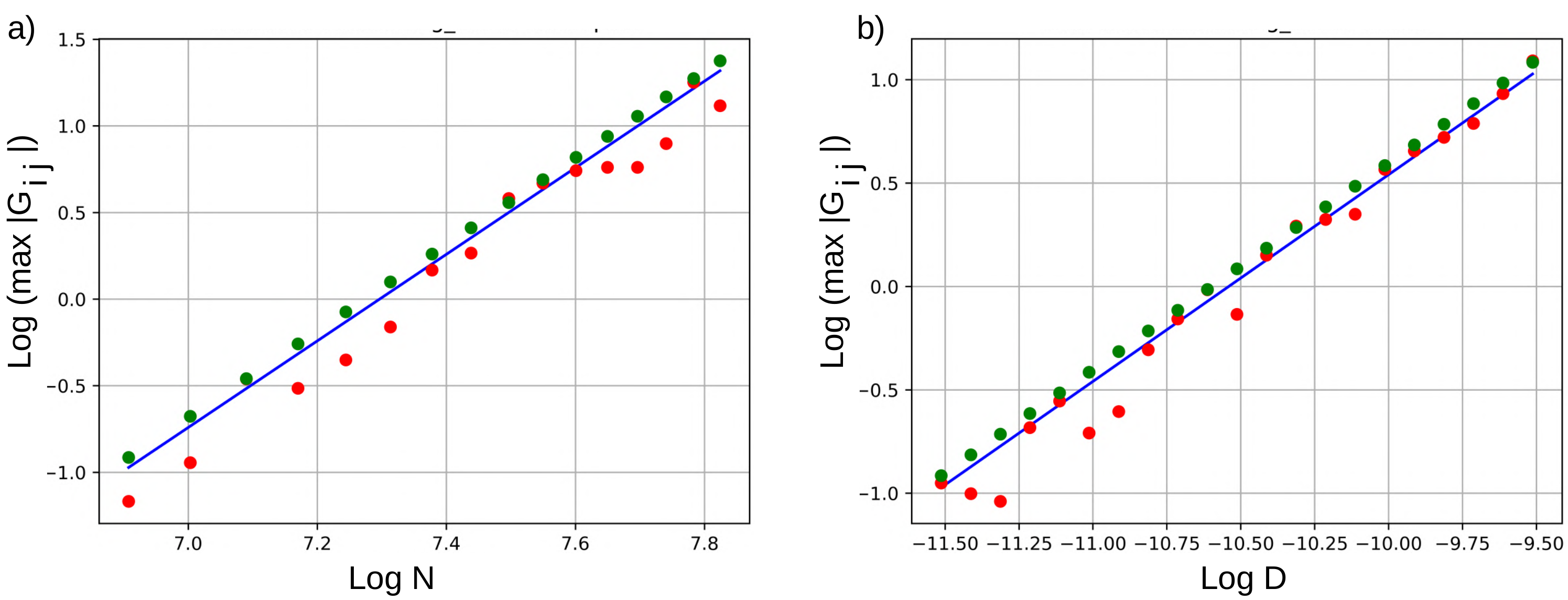}
    \caption{In both pictures \textbf{(a)} and \textbf{(b)} on the vertical axis is the change of BCS matrix elements \(G_{N, N-2}\) corresponding to the next to lowest energy Cooper pairs (lowest energy pairs formed by excitations with different energies) in logarithmic scale. Such Cooper pairs are the most sensitive to perturbation in the external magnetic field. In \cref{sec:appl_harmonic}, in particular \cref{eq:gij_final_momentum} we show that these are pairs where one excitation has the lowest energy in the spectrum and the other excitation - next to the lowest energy. The plot \textbf{(a)} shows how the response in the quantity of such pairs scales with the length of the spin chain \(N\), the \textbf{(b)} plot shows how it scales with the amplitude \(D\) of perturbation in the external magnetic field. Blue line shows theoretical prediction \cref{eq:gij_numerical_general} for response to a sinusoidal perturbation, see \cref{eq:disord_q_def} where only the lowest momentum \(q = q_{min}\) perturbation amplitude is non-zero \(b_q \neq 0\) and equal to \(b_q = D\). Green dots show numerically obtained result for the same quantity. Red dots show numerical result for the same quantity, but when when perturbations on other frequencies are random instead of being zero. That means, for \(q = q_{min}\) \(b_q = D\), as earlier, but for other \(q\), \(b_q\) are no longer zero, instead, they are random: \(b_q \in \mathcal{U}_{[-D, D]}\).} \label{fig:gij_numerical_disord}
\end{figure*} 
where since \(k_\alpha\) is close to \(\pi\) we have introduced \(q_\alpha = \pi - k_\alpha\) of an order of several steps in the Brillouin zone, that is \(q_\alpha = \mathcal{O}(N^{-1})\). Similarly, we put \(q_\beta = \pi + k_\beta = \mathcal{O}(N^{-1})\) and let \(E_i = E(k_\beta) , \; E_\beta = E(k_\beta)\), then we can expand
\begin{equation}\label{eq:e_thr_momentum}
    E(k) = \sqrt{h^2 + 2h\cos(k) + 1}
\end{equation}
near \(\pi\), \(-\pi\) and use it to calculate the energy difference. We expand to the second order because the first derivative at \(\pi\), \(-\pi\) is zero:
\begin{equation}\label{eq:e_diff_deriv}
    E_\alpha - E_\beta = \frac{h}{2(h - 1)}(q_\alpha^2 - q_\beta^2) = \mathcal{O}(N^{-2}).
\end{equation}
Here we took \(h > 1\). From \cref{eq:v_bogolyub} we derive:
\begin{equation}\label{eq:v_final_approx}
    (V_1)_{\alpha\beta} = (V_1)_{\alpha\beta}^* = 2b_q \cos(\frac{\theta_{k_\alpha}}{2})\cos(\frac{\theta_{k_\beta}}{2}) \approx 2 b_q.
\end{equation}
Substituting \cref{eq:g_approx_deriv,eq:e_diff_deriv,eq:v_final_approx} into \cref{eq:delta_g_main}, we finally obtain:
\begin{equation}
\begin{aligned}\label{eq:gij_first_ord}
    \Delta G_{\alpha\beta} \approx 2(2b_q) \left(\frac{2(h-1)}{h}\right)^2\left(\frac{\frac{1}{q_\alpha} - \frac{1}{q_\beta}}{q_\alpha^2 - q_\beta^2}\right)\\=
    -\omega \frac{1}{(q_\alpha + q_\beta)q_\alpha q_\beta} \; , \; \omega = 4b_q \left(\frac{2(h-1)}{h}\right)^2.
\end{aligned}
\end{equation}
With perturbation wave vector \(q = q_\beta - q_\alpha\). In the derivation we used several assumptions:
\begin{enumerate}
    \item Thermodynamic limit \(N \gg 1\)
    \item \(k_\alpha\), \(k_\beta\) are close to \(\pi\), \(-\pi\) respectively; or equivalently, the energies \(E_\alpha, \; E_\beta\) are close to the lowest end of the spectrum
    \item The potential \(\mathcal{V}\) has a single mode \(q\) (see \cref{eq:v_bogolyub}) connecting energies \(E_\alpha, \; E_\beta\). In the case of several such modes \(q_s \,\), there will be a \(\sum\limits_{q_s} b_{q_s}\) in the numerator of \(\alpha\) \cref{eq:gij_first_ord}
    \item The potential has critical scaling (\(\nu = \Theta(N^{-3})\), see \cref{eq:disord_cond}).
\end{enumerate}
Note, that only  for \(\alpha = \beta \pm 1\) (\(+\) for \(\alpha\) odd, \(-\) for \(\alpha\) even) \(G_{\alpha\beta} \neq 0\) (see \cref{sec:g_shape}). Therefore, for all the other elements \(\Delta G_{\alpha\beta} = \Tilde{G}_{\alpha\beta}\).
The possible range of momentum values is \(k = \frac{2\pi}{N} p \, , \, p \in (-\frac{N-2}{2}, \dots, 0, \dots, \frac{N}{2})\). We designate \(k_\alpha = \pi - \frac{2\pi}{N}p_\alpha \, , \; k_\beta = -\pi + \frac{2\pi}{N}p_\beta\) and obtain:
\begin{equation}\label{eq:gij_final_momentum}
    \Tilde{G}_{\alpha\beta} = \Tilde{G}_{k, -k+q} \approx \omega' N^3 \frac{1}{(p_\alpha + p_\beta) p_\alpha p_\beta},
\end{equation}
where \(p_\alpha, \, p_\beta\) are arbitrary integers such that \(p_\alpha, \, p_\beta \ll N\), and \(\omega' = - \frac{4b_q}{(2 \pi)^3} \left(\frac{2(h-1)}{h}\right)^2\). Consequently, for the maximal entry of the matrix \(\Delta G\) in the momentum basis we have to choose \(p_\alpha = 1\),\ \(p_\beta = 2\), \(q = 2\pi / N\) and obtain:
\begin{equation}\label{eq:gij_numerical_general}
    \log (\max_{k_1, k_2} \abs{\Delta G_{k1, k2}}) = 3 \log(N) + \log(b_q) + \const{}.
\end{equation}
From \cref{eq:gij_numerical_general} follows the condition on the minimal modulation amplitude necessary to cause a dynamical quantum phase transition. Namely, for the perturbation-induced zeros to cross the imaginary axis, the maximal perturbation-induced element of matrix \(G\) must satisfy 
\begin{align}
    \log (\max\limits_{k_1 \neq -k_2} \abs{\Delta G_{k1, k2}}^2) \geq 0,
\end{align}
see \cref{eq:zeros_calc}. Taking the borderline case we obtain:
\begin{equation}
\begin{aligned}\label{eq:mindis}
    \log(b_q)_{min} = - 3 \log(N) + \const{}\\
    (b_q)_{min} \geq \frac{3(2\pi)^3}{8} \left(\frac{h}{h - 1}\right)^2 \frac{1}{N^3} .
\end{aligned}
\end{equation}
In the position basis the coefficient in front of \(\log(N)\) in \cref{eq:gij_numerical_general} will change to \(2.5\) due to the \(1 / \sqrt{N}\) normalization of the Fourier transform. Additionally, if we choose a random perturbation instead of a sinusoidal one, higher modes will become non-zero. This will lead to Fisher zero points being scattered around the line described in \cref{eq:gij_numerical_general} - see \cref{fig:gij_numerical_disord} and \cref{sec:dgij_num_test} for the details of the corresponding numerical calculation.

\section{Influence of Disorder on the Fermionic Correlators}\label{sec:theory_correls}
In \cref{sec:main_res} we gave an intuitive explanation for the insensitivity of correlators to disorder-induced DQPTs. Now we make a precise statement. The proof operates only with the BCS matrix \(G\).
\begin{theorem}\label{theorem:correl}
Suppose the following holds:
\begin{enumerate}
    \item We introduce perturbations with only long-wavelength (\(q = \mathcal{O}(N^{-1})\)) Fourier components of order \(V_q = \mathcal{O}(N^{-3})\) and all other Fourier components zero. Note, that in \cref{sec:theory_zeros} we showed that this is sufficient to induce a second series of DQPTs. \label{cond:correl_th_dis_type}
    \item We consider two spins at a distance \(\abs{i-j} = d \ll N\) which is fixed and independent of the total number of spins \(N\). \label{enum:close_spins}
\end{enumerate}
Then for these two spins correlators in the \(x\)-direction change, compared to the homogeneous case, as:
\begin{equation}
    \Delta \langle \sigma_i^x \sigma_j^x \rangle = \mathcal{O}(N^{-1}).
    \tag{\ref{eq:x_correl_change}}
\end{equation}
Only the first condition suffices to ensure that \(z\) correlators vanish:
\begin{equation} \label{eq:z_correl_change}
    \Delta \langle \sigma_i^z \sigma_j^z \rangle = \mathcal{O}(N^{-1}).
\end{equation}
\end{theorem}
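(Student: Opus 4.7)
The plan is in three steps: reduce spin correlators to fermionic two-point functions via \cref{eq:corrx_pfaff}, bound those fermionic two-point functions using \cref{theorem:main}, and lift the bound back to the spin correlators. The reduction is immediate, since the blocks $S,Q,G$ inside \cref{eq:corrx_pfaff} are, up to Kronecker deltas, linear combinations of the position-basis two-point functions $\langle c_i^{(\dagger)} c_j^{(\dagger)}\rangle(t)$. Once \cref{eq:ferm_scale_intro} is established, the remainder of \cref{theorem:correl} becomes a statement about Pfaffians of matrices whose entries shift by $\mathcal{O}(N^{-1})$.

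For the fermionic bound I would invoke \cref{theorem:main}. Condition 1 of \cref{theorem:correl} is exactly the scaling $\nu = \Theta(N^{-3})$ assumed in \cref{eq:disord_cond}, so \cref{theorem:main} applies: in the energy basis only an $\mathcal{O}(1)$-sized block of $\Delta G_{\alpha\beta}$ in the low-energy corner $\alpha,\beta \approx N$ contains entries of order unity, while the remaining entries vanish as $N \to \infty$, and $\Delta E_\alpha = \Theta(N^{-3})$. Hence $\lVert \Delta G\rVert_F = \Theta(1)$. Because the map from the energy basis to the fermionic position basis is the composition of the unitary Bogoliubov and Fourier transforms, it preserves the Frobenius norm; so the position-basis counterpart of $\Delta G$ has total squared mass $\Theta(1)$ distributed over $N^2$ entries, giving the $\ell^{2}$-average bound $\mathcal{O}(N^{-1})$. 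To upgrade to the uniform entrywise statement $\Delta \langle c_i^{(\dagger)} c_j^{(\dagger)}\rangle = \mathcal{O}(N^{-1})$ in \cref{eq:ferm_scale_intro}, I would use that the divergent entries of $\Delta G_{\alpha\beta}$ live on long-wavelength modes near $k = \pm \pi$, whose coordinate wavefunctions spread uniformly with amplitude $\mathcal{O}(N^{-1/2})$; condition 1 forbids short-wavelength components in $\mathcal{V}$, so the mild localization in \cref{fig:ipr_population} cannot concentrate mass on individual sites. The spectral shift $\Delta E_\alpha = \Theta(N^{-3})$ dresses $\eta_\alpha(t)$ by only $\mathcal{O}(N^{-3})$ over any finite $t$ and is subleading.

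The final step is elementary Pfaffian bookkeeping. For $\langle \sigma_i^z\sigma_j^z\rangle$ the underlying matrix has fixed size $4 \times 4$ regardless of $|i-j|$, so its Pfaffian is a polynomial of fixed degree in the entries; an entrywise shift of $\mathcal{O}(N^{-1})$ produces $\Delta \langle \sigma_i^z\sigma_j^z\rangle = \mathcal{O}(N^{-1})$, which is \cref{eq:z_correl_change}, and condition 2 is not needed. For $\langle \sigma_i^x\sigma_j^x\rangle$ the matrix has size $2d \times 2d$, and condition 2 is invoked precisely to keep $d$ independent of $N$; the Pfaffian is then a polynomial of $\mathcal{O}(1)$ degree in $\mathcal{O}(1)$ variables, so first-order expansion in the entry shifts yields \cref{eq:x_correl_change}.

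The main obstacle is the uniform position-basis bound on the fermionic correlators. The Frobenius identity produces the $\ell^{2}$-average bound for free, but ruling out a single anomalously large entry demands carefully combining condition 1 (pure long-wavelength disorder) with the structural information from \cref{theorem:main} about precisely which energy-basis entries of $\Delta G$ become $\Theta(1)$. Once that uniform bound is secured, the rest of the proof is finite-size polynomial perturbation and causes no further difficulty.
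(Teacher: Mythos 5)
Your overall architecture matches the paper's: reduce the spin correlators to Pfaffians of fermionic two-point functions via \cref{eq:corrx_pfaff}, prove an entrywise $\mathcal{O}(N^{-1})$ bound on the change of those two-point functions, and finish with fixed-degree polynomial bookkeeping (the $4\times 4$ matrix for $z$-correlators needing only condition 1, the $2d\times 2d$ matrix for $x$-correlators needing condition 2). That part is fine. The gap is in the central step. You identify the change in the fermionic two-point functions with a unitary transport of $\Delta G$ and then run a Frobenius-norm/equidistribution argument on $\Delta G$. But $\Delta G$ is not the change in any correlation matrix: the two-point functions are \emph{nonlinear} functions of the BCS pairing matrix, e.g.\ $\langle \eta_\alpha \eta_\beta^\dagger\rangle = (1+G^\dagger G)^{-1}_{\alpha\beta}$ and $\langle\eta_\alpha\eta_\beta\rangle = \bigl(G(1+G^\dagger G)^{-1}\bigr)_{\alpha\beta}$. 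This matters quantitatively, not just formally: by \cref{theorem:main}, $\Delta G_{\alpha\beta}=\Theta(1)$ precisely in the low-energy corner where $G_{\alpha,\alpha\pm1}=\Theta(N)$, so $(1+G^\dagger G)^{-1}=\mathcal{O}(N^{-2})$ there, and it is this suppression that turns an $\mathcal{O}(1)$ change of the pairing amplitude into an $\mathcal{O}(N^{-1})$ (or smaller) change of the correlators. The paper's \cref{sec:correl_bounds} does exactly this computation, expanding $(1+\tilde G^\dagger\tilde G)^{-1}$ in a geometric series around $A=1+G^\dagger G$ and tabulating the resulting asymptotics of $\Delta\Gamma_{1,\dots,4}$ block by block. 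Without this step you have no bound at all on the correlator change; transporting $\Delta G$ itself to the position basis would, at best, give an operator-norm bound of $\Theta(1)$, since $\Delta G$ has $\Theta(1)$ entries in a fixed-size block.

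Your second worry --- upgrading an $\ell^2$-average bound to a uniform entrywise bound --- is real, but the resolution you sketch (uniform spreading of the near-$k=\pm\pi$ wavefunctions, plus the absence of short-wavelength disorder) is heuristic and unnecessary. Once the energy-basis asymptotics of $\Delta\Gamma$ are in hand, the paper closes this cleanly: $\max_{ij}\abs{(U\,\Delta\Gamma\, U^\dagger)_{ij}} \le \max_{\norm{v}=\norm{w}=1}\abs{v^\dagger \Delta\Gamma\, w}$, which is bounded by the maximal column sum $\max_j\sum_i\abs{(\Delta\Gamma)_{ij}} = \mathcal{O}(N)\cdot\mathcal{O}(N^{-2})+\mathcal{O}(1)\cdot\mathcal{O}(N^{-1})=\mathcal{O}(N^{-1})$. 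I recommend replacing both the Frobenius argument and the spreading heuristic with the explicit $\Gamma$-block computation and this operator-norm estimate.
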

\begin{proof}
From \cref{eq:corrx_pfaff} and the note below it, it is clear that the spin-spin correlators are polynomials in fermionic correlators. In the case of \(\langle \sigma_i^z \sigma_j^z \rangle\) correlators, the degree of such a polynomial is always 2, while for \(\langle \sigma_i^x \sigma_j^x \rangle\) correlators under condition $2$ of the theorem, the degree is \(\abs{i-j} \ll N\) and independent of \(N\). Thus, it is enough to prove that the change in all fermionic correlators scales as \(\mathcal{O}(N^{-1})\) to prove the theorem (for our purposes any negative power of \(N\) would suffice). 

The proof is technical and conducted in \cref{sec:correl_bounds}. The main idea is to express fermionic correlators through the BCS-matrix \(G\) and then substitute disorder-induced corrections from \cref{eq:dG_matform}.
\end{proof}
\begin{itemize}
\item [\textit{Notes:}]
\item Though formally we only proved scaling \(\mathcal{O}(N^{-1})\) for the change in arbitrary fermionic correlators and only short-range spin-spin correlators, numerical simulations show that the same is also true for long-range spin-spin correlators -- see \cref{fig:fisher_zeros_no_small,fig:strong_dis}, where correlators shown are for spins separated by half the chain's length.
\item The proof relies on the calculations from \cref{sec:correl_bounds}, which utilize the fact that \(\abs{G_{\alpha,\alpha+1}}, \abs{G_{\beta, \beta+1}} \gg \abs{\Delta G_{\alpha\beta}}\). Because the \(G\)-matrix determines the coordinates of the Fisher zeros, we can say that the behavior of spin-spin correlators, unlike the behavior of Loschmidt echo, is not determined by the local properties of Fisher zeros near the crossing of the imaginary axis. Instead, to calculate the effect of DQPTs on correlators we need information about all the Fisher zeros.
\end{itemize}
This theorem can be generalized to \(L\)-spin correlators. The only non-zero spin correlators along the \(x\)-axis are those with an even number of spins. Indeed, the Jordan-Wigner transform turns all \(\sigma_i^x\) into an odd number of spin operators, thus correlators of an odd number of spins result in correlators of an odd number of fermions. These are all zero, because we started with a state with an even number of fermions, and evolved it with a parity-preserving Hamiltonian.

Suppose that in an \(L\)-chain of spin operators \(\sigma_l^x\) is the leftmost spin, \(\sigma_r^x\) is the rightmost spin. Also suppose \(\abs{r-l} \ll N\) and \(L \ll N\) and both are independent of \(N\). Then we can similarly apply the Jordan-Wigner transformation to each spin operator in the chain: 

\begin{equation}
\sigma_i^x = c_i^\dagger \exp(-i\pi \sum\limits_{j=1}^{j=i-1} n_j) + c_i \exp(i\pi \sum\limits_{j=1}^{j=i-1} n_j), 
\end{equation}
where \(n_j = c_j^\dagger c_j\) is the number of excitations on \(j\)-th site.

All the contributions to the phase \(\exp(i\pi n_j)\) with \(j < l\) (\(j > r\)) commute with all the other operators and can be pulled from each \(\sigma_i^x\) to the very left (to the very right) of the operator chain respectively. For an even number of spin operators \(L\), each such phase contribution \(\exp(i\pi n_j)\) has an even power and thus is canceled. We are left with \(L\) operators of the form \(c_i^\dagger + c_i\) and at most \(\abs{r-l}\) operators of the form \(\exp(i\pi n_k)\). This is a sum of \(2^L\) chains of at most \(m = L + 2\abs{r-l}\) fermionic operators, and to each chain we can apply Wick's theorem. Since \(m \ll N\) and independent of \(N\) we can place an upper bound on the change of each chain as in \cref{eq:ferm_scale_intro} with \(\mathcal{O}(N^{-1})\). Finally, we use that \(L\) is independent of \(N\) and additionally suppose that \(2^L \ll N\) to say that the sum of all the chains can be bound with \(2^L \mathcal{O}(N^{-1})\). The final result is \cref{eq:ferm_scale_intro_many}:
\begin{equation}\tag{\ref{eq:ferm_scale_intro_many}}
    \Delta \underbrace{\langle \sigma_{i1}^x \sigma_{i2}^x \dots \sigma_{iL}^x \rangle}_{L \text{ spins}} = \mathcal{O}(N^{-1})
\end{equation}
For \(z\)-spin correlators the generalization from two spin case to \(n\) spin case is even more direct, because \(\sigma_i^z = 1 - 2 n_i\) and \(n_i\) all commute with each other.

\section{Conclusion}\label{sec:conclusion}

We demonstrated that the disorder-induced dynamical quantum phase transition in the Ising model is an example of a non-topological dynamical quantum phase transition without a local order parameter. A series of critical times universally appears for any vanishing perturbation with Fourier components at the lowest momentum of order \(1/N^3\). This could be considered a dynamical counterpart of the Anderson orthogonality catastrophe \cite{anderson1967infrared}. That is, a vanishingly small perturbation causes a large deviation in the many-body wave function, while the observables remain intact. In our setting, it is the Loschmidt echo that changes drastically.

Several intriguing questions can be addressed in future studies. As we have an example of the DQPT where no order parameter can be found, a natural question is whether this phase transition belongs to a larger class with the same property. Vice versa: what class of the DQPTs can be endowed with an order parameter?  

\section{Acknowledgments}
The authors are grateful to Peru d'Ornellas, who read carefully and helped greatly to edit the first version of the manuscript. The work was carried out in the framework of the Roadmap for Quantum computing in Russia.

\begin{thebibliography}{33}%
\makeatletter
\providecommand \@ifxundefined [1]{%
 \@ifx{#1\undefined}
}%
\providecommand \@ifnum [1]{%
 \ifnum #1\expandafter \@firstoftwo
 \else \expandafter \@secondoftwo
 \fi
}%
\providecommand \@ifx [1]{%
 \ifx #1\expandafter \@firstoftwo
 \else \expandafter \@secondoftwo
 \fi
}%
\providecommand \natexlab [1]{#1}%
\providecommand \enquote  [1]{``#1''}%
\providecommand \bibnamefont  [1]{#1}%
\providecommand \bibfnamefont [1]{#1}%
\providecommand \citenamefont [1]{#1}%
\providecommand \href@noop [0]{\@secondoftwo}%
\providecommand \href [0]{\begingroup \@sanitize@url \@href}%
\providecommand \@href[1]{\@@startlink{#1}\@@href}%
\providecommand \@@href[1]{\endgroup#1\@@endlink}%
\providecommand \@sanitize@url [0]{\catcode `\\12\catcode `\$12\catcode
  `\&12\catcode `\#12\catcode `\^12\catcode `\_12\catcode `\%12\relax}%
\providecommand \@@startlink[1]{}%
\providecommand \@@endlink[0]{}%
\providecommand \url  [0]{\begingroup\@sanitize@url \@url }%
\providecommand \@url [1]{\endgroup\@href {#1}{\urlprefix }}%
\providecommand \urlprefix  [0]{URL }%
\providecommand \Eprint [0]{\href }%
\providecommand \doibase [0]{http://dx.doi.org/}%
\providecommand \selectlanguage [0]{\@gobble}%
\providecommand \bibinfo  [0]{\@secondoftwo}%
\providecommand \bibfield  [0]{\@secondoftwo}%
\providecommand \translation [1]{[#1]}%
\providecommand \BibitemOpen [0]{}%
\providecommand \bibitemStop [0]{}%
\providecommand \bibitemNoStop [0]{.\EOS\space}%
\providecommand \EOS [0]{\spacefactor3000\relax}%
\providecommand \BibitemShut  [1]{\csname bibitem#1\endcsname}%
\let\auto@bib@innerbib\@empty
\bibitem [{\citenamefont {Eisert}\ \emph {et~al.}(2015)\citenamefont {Eisert},
  \citenamefont {Friesdorf},\ and\ \citenamefont
  {Gogolin}}]{eisert2015quantum}%
  \BibitemOpen
  \bibfield  {author} {\bibinfo {author} {\bibfnamefont {Jens}\ \bibnamefont
  {Eisert}}, \bibinfo {author} {\bibfnamefont {Mathis}\ \bibnamefont
  {Friesdorf}}, \ and\ \bibinfo {author} {\bibfnamefont {Christian}\
  \bibnamefont {Gogolin}},\ }\bibfield  {title} {\enquote {\bibinfo {title}
  {Quantum many-body systems out of equilibrium},}\ }\href@noop {} {\bibfield
  {journal} {\bibinfo  {journal} {Nature Physics}\ }\textbf {\bibinfo {volume}
  {11}},\ \bibinfo {pages} {124--130} (\bibinfo {year} {2015})}\BibitemShut
  {NoStop}%
\bibitem [{\citenamefont {Heyl}\ \emph {et~al.}(2013)\citenamefont {Heyl},
  \citenamefont {Polkovnikov},\ and\ \citenamefont
  {Kehrein}}]{heyl2013dynamical}%
  \BibitemOpen
  \bibfield  {author} {\bibinfo {author} {\bibfnamefont {Markus}\ \bibnamefont
  {Heyl}}, \bibinfo {author} {\bibfnamefont {Anatoli}\ \bibnamefont
  {Polkovnikov}}, \ and\ \bibinfo {author} {\bibfnamefont {Stefan}\
  \bibnamefont {Kehrein}},\ }\bibfield  {title} {\enquote {\bibinfo {title}
  {Dynamical quantum phase transitions in the transverse-field ising model},}\
  }\href@noop {} {\bibfield  {journal} {\bibinfo  {journal} {Physical review
  letters}\ }\textbf {\bibinfo {volume} {110}},\ \bibinfo {pages} {135704}
  (\bibinfo {year} {2013})}\BibitemShut {NoStop}%
\bibitem [{\citenamefont {Heyl}(2018)}]{heyl2018dynamical}%
  \BibitemOpen
  \bibfield  {author} {\bibinfo {author} {\bibfnamefont {Markus}\ \bibnamefont
  {Heyl}},\ }\bibfield  {title} {\enquote {\bibinfo {title} {Dynamical quantum
  phase transitions: a review},}\ }\href@noop {} {\bibfield  {journal}
  {\bibinfo  {journal} {Reports on Progress in Physics}\ }\textbf {\bibinfo
  {volume} {81}},\ \bibinfo {pages} {054001} (\bibinfo {year}
  {2018})}\BibitemShut {NoStop}%
\bibitem [{\citenamefont {Landau}\ and\ \citenamefont
  {Lifshitz}(2013)}]{landau2013statistical}%
  \BibitemOpen
  \bibfield  {author} {\bibinfo {author} {\bibfnamefont {Lev~Davidovich}\
  \bibnamefont {Landau}}\ and\ \bibinfo {author} {\bibfnamefont
  {Evgenii~Mikhailovich}\ \bibnamefont {Lifshitz}},\ }\href@noop {} {\emph
  {\bibinfo {title} {Statistical Physics: Volume 5}}},\ Vol.~\bibinfo {volume}
  {5}\ (\bibinfo  {publisher} {Elsevier},\ \bibinfo {year} {2013})\BibitemShut
  {NoStop}%
\bibitem [{\citenamefont {Yang}\ and\ \citenamefont
  {Lee}(1952)}]{yang1952statistical}%
  \BibitemOpen
  \bibfield  {author} {\bibinfo {author} {\bibfnamefont {Chen-Ning}\
  \bibnamefont {Yang}}\ and\ \bibinfo {author} {\bibfnamefont {Tsung-Dao}\
  \bibnamefont {Lee}},\ }\bibfield  {title} {\enquote {\bibinfo {title}
  {Statistical theory of equations of state and phase transitions. i. theory of
  condensation},}\ }\href@noop {} {\bibfield  {journal} {\bibinfo  {journal}
  {Physical Review}\ }\textbf {\bibinfo {volume} {87}},\ \bibinfo {pages} {404}
  (\bibinfo {year} {1952})}\BibitemShut {NoStop}%
\bibitem [{\citenamefont {Fisher}(1965)}]{fisher1965nature}%
  \BibitemOpen
  \bibfield  {author} {\bibinfo {author} {\bibfnamefont {Michael~E}\
  \bibnamefont {Fisher}},\ }\href@noop {} {\emph {\bibinfo {title} {The nature
  of critical points}}}\ (\bibinfo  {publisher} {University of Colorado
  Press},\ \bibinfo {year} {1965})\BibitemShut {NoStop}%
\bibitem [{\citenamefont {Bena}\ \emph {et~al.}(2005)\citenamefont {Bena},
  \citenamefont {Droz},\ and\ \citenamefont {Lipowski}}]{bena2005statistical}%
  \BibitemOpen
  \bibfield  {author} {\bibinfo {author} {\bibfnamefont {Ioana}\ \bibnamefont
  {Bena}}, \bibinfo {author} {\bibfnamefont {Michel}\ \bibnamefont {Droz}}, \
  and\ \bibinfo {author} {\bibfnamefont {Adam}\ \bibnamefont {Lipowski}},\
  }\bibfield  {title} {\enquote {\bibinfo {title} {Statistical mechanics of
  equilibrium and nonequilibrium phase transitions: the yang--lee formalism},}\
  }\href@noop {} {\bibfield  {journal} {\bibinfo  {journal} {International
  Journal of Modern Physics B}\ }\textbf {\bibinfo {volume} {19}},\ \bibinfo
  {pages} {4269--4329} (\bibinfo {year} {2005})}\BibitemShut {NoStop}%
\bibitem [{\citenamefont {Vajna}\ and\ \citenamefont
  {D{\'o}ra}(2014)}]{vajna2014disentangling}%
  \BibitemOpen
  \bibfield  {author} {\bibinfo {author} {\bibfnamefont {Szabolcs}\
  \bibnamefont {Vajna}}\ and\ \bibinfo {author} {\bibfnamefont {Bal{\'a}zs}\
  \bibnamefont {D{\'o}ra}},\ }\bibfield  {title} {\enquote {\bibinfo {title}
  {Disentangling dynamical phase transitions from equilibrium phase
  transitions},}\ }\href@noop {} {\bibfield  {journal} {\bibinfo  {journal}
  {Physical Review B}\ }\textbf {\bibinfo {volume} {89}},\ \bibinfo {pages}
  {161105} (\bibinfo {year} {2014})}\BibitemShut {NoStop}%
\bibitem [{\citenamefont {Karrasch}\ and\ \citenamefont
  {Schuricht}(2013)}]{karrasch2013dynamical}%
  \BibitemOpen
  \bibfield  {author} {\bibinfo {author} {\bibfnamefont {C}~\bibnamefont
  {Karrasch}}\ and\ \bibinfo {author} {\bibfnamefont {D}~\bibnamefont
  {Schuricht}},\ }\bibfield  {title} {\enquote {\bibinfo {title} {Dynamical
  phase transitions after quenches in nonintegrable models},}\ }\href@noop {}
  {\bibfield  {journal} {\bibinfo  {journal} {Physical Review B}\ }\textbf
  {\bibinfo {volume} {87}},\ \bibinfo {pages} {195104} (\bibinfo {year}
  {2013})}\BibitemShut {NoStop}%
\bibitem [{\citenamefont {Schmitt}\ and\ \citenamefont
  {Kehrein}(2015)}]{schmitt2015dynamical}%
  \BibitemOpen
  \bibfield  {author} {\bibinfo {author} {\bibfnamefont {Markus}\ \bibnamefont
  {Schmitt}}\ and\ \bibinfo {author} {\bibfnamefont {Stefan}\ \bibnamefont
  {Kehrein}},\ }\bibfield  {title} {\enquote {\bibinfo {title} {Dynamical
  quantum phase transitions in the kitaev honeycomb model},}\ }\href@noop {}
  {\bibfield  {journal} {\bibinfo  {journal} {Physical Review B}\ }\textbf
  {\bibinfo {volume} {92}},\ \bibinfo {pages} {075114} (\bibinfo {year}
  {2015})}\BibitemShut {NoStop}%
\bibitem [{\citenamefont {Vajna}\ and\ \citenamefont
  {D{\'o}ra}(2015)}]{vajna2015topological}%
  \BibitemOpen
  \bibfield  {author} {\bibinfo {author} {\bibfnamefont {Szabolcs}\
  \bibnamefont {Vajna}}\ and\ \bibinfo {author} {\bibfnamefont {Bal{\'a}zs}\
  \bibnamefont {D{\'o}ra}},\ }\bibfield  {title} {\enquote {\bibinfo {title}
  {Topological classification of dynamical phase transitions},}\ }\href@noop {}
  {\bibfield  {journal} {\bibinfo  {journal} {Physical Review B}\ }\textbf
  {\bibinfo {volume} {91}},\ \bibinfo {pages} {155127} (\bibinfo {year}
  {2015})}\BibitemShut {NoStop}%
\bibitem [{\citenamefont {Canovi}\ \emph {et~al.}(2014)\citenamefont {Canovi},
  \citenamefont {Werner},\ and\ \citenamefont {Eckstein}}]{canovi2014first}%
  \BibitemOpen
  \bibfield  {author} {\bibinfo {author} {\bibfnamefont {Elena}\ \bibnamefont
  {Canovi}}, \bibinfo {author} {\bibfnamefont {Philipp}\ \bibnamefont
  {Werner}}, \ and\ \bibinfo {author} {\bibfnamefont {Martin}\ \bibnamefont
  {Eckstein}},\ }\bibfield  {title} {\enquote {\bibinfo {title} {First-order
  dynamical phase transitions},}\ }\href@noop {} {\bibfield  {journal}
  {\bibinfo  {journal} {Physical Review Letters}\ }\textbf {\bibinfo {volume}
  {113}},\ \bibinfo {pages} {265702} (\bibinfo {year} {2014})}\BibitemShut
  {NoStop}%
\bibitem [{\citenamefont {Thouless}\ \emph {et~al.}(1982)\citenamefont
  {Thouless}, \citenamefont {Kohmoto}, \citenamefont {Nightingale},\ and\
  \citenamefont {den Nijs}}]{thouless1982quantized}%
  \BibitemOpen
  \bibfield  {author} {\bibinfo {author} {\bibfnamefont {David~J}\ \bibnamefont
  {Thouless}}, \bibinfo {author} {\bibfnamefont {Mahito}\ \bibnamefont
  {Kohmoto}}, \bibinfo {author} {\bibfnamefont {M~Peter}\ \bibnamefont
  {Nightingale}}, \ and\ \bibinfo {author} {\bibfnamefont {Marcel}\
  \bibnamefont {den Nijs}},\ }\bibfield  {title} {\enquote {\bibinfo {title}
  {Quantized hall conductance in a two-dimensional periodic potential},}\
  }\href@noop {} {\bibfield  {journal} {\bibinfo  {journal} {Physical review
  letters}\ }\textbf {\bibinfo {volume} {49}},\ \bibinfo {pages} {405}
  (\bibinfo {year} {1982})}\BibitemShut {NoStop}%
\bibitem [{\citenamefont {Budich}\ and\ \citenamefont
  {Heyl}(2016)}]{budich2016dynamical}%
  \BibitemOpen
  \bibfield  {author} {\bibinfo {author} {\bibfnamefont {Jan~Carl}\
  \bibnamefont {Budich}}\ and\ \bibinfo {author} {\bibfnamefont {Markus}\
  \bibnamefont {Heyl}},\ }\bibfield  {title} {\enquote {\bibinfo {title}
  {Dynamical topological order parameters far from equilibrium},}\ }\href@noop
  {} {\bibfield  {journal} {\bibinfo  {journal} {Physical Review B}\ }\textbf
  {\bibinfo {volume} {93}},\ \bibinfo {pages} {085416} (\bibinfo {year}
  {2016})}\BibitemShut {NoStop}%
\bibitem [{\citenamefont {Heyl}(2015)}]{heyl2015scaling}%
  \BibitemOpen
  \bibfield  {author} {\bibinfo {author} {\bibfnamefont {Markus}\ \bibnamefont
  {Heyl}},\ }\bibfield  {title} {\enquote {\bibinfo {title} {Scaling and
  universality at dynamical quantum phase transitions},}\ }\href@noop {}
  {\bibfield  {journal} {\bibinfo  {journal} {Physical Review Letters}\
  }\textbf {\bibinfo {volume} {115}},\ \bibinfo {pages} {140602} (\bibinfo
  {year} {2015})}\BibitemShut {NoStop}%
\bibitem [{\citenamefont {Trapin}\ \emph {et~al.}(2021)\citenamefont {Trapin},
  \citenamefont {Halimeh},\ and\ \citenamefont
  {Heyl}}]{trapin2021unconventional}%
  \BibitemOpen
  \bibfield  {author} {\bibinfo {author} {\bibfnamefont {Daniele}\ \bibnamefont
  {Trapin}}, \bibinfo {author} {\bibfnamefont {Jad~C}\ \bibnamefont {Halimeh}},
  \ and\ \bibinfo {author} {\bibfnamefont {Markus}\ \bibnamefont {Heyl}},\
  }\bibfield  {title} {\enquote {\bibinfo {title} {Unconventional critical
  exponents at dynamical quantum phase transitions in a random ising chain},}\
  }\href@noop {} {\bibfield  {journal} {\bibinfo  {journal} {Physical Review
  B}\ }\textbf {\bibinfo {volume} {104}},\ \bibinfo {pages} {115159} (\bibinfo
  {year} {2021})}\BibitemShut {NoStop}%
\bibitem [{\citenamefont {Jurcevic}\ \emph {et~al.}(2017)\citenamefont
  {Jurcevic}, \citenamefont {Shen}, \citenamefont {Hauke}, \citenamefont
  {Maier}, \citenamefont {Brydges}, \citenamefont {Hempel}, \citenamefont
  {Lanyon}, \citenamefont {Heyl}, \citenamefont {Blatt},\ and\ \citenamefont
  {Roos}}]{jurcevic2017direct}%
  \BibitemOpen
  \bibfield  {author} {\bibinfo {author} {\bibfnamefont {P}~\bibnamefont
  {Jurcevic}}, \bibinfo {author} {\bibfnamefont {H}~\bibnamefont {Shen}},
  \bibinfo {author} {\bibfnamefont {P}~\bibnamefont {Hauke}}, \bibinfo {author}
  {\bibfnamefont {C}~\bibnamefont {Maier}}, \bibinfo {author} {\bibfnamefont
  {T}~\bibnamefont {Brydges}}, \bibinfo {author} {\bibfnamefont
  {C}~\bibnamefont {Hempel}}, \bibinfo {author} {\bibfnamefont
  {BP}~\bibnamefont {Lanyon}}, \bibinfo {author} {\bibfnamefont {Markus}\
  \bibnamefont {Heyl}}, \bibinfo {author} {\bibfnamefont {R}~\bibnamefont
  {Blatt}}, \ and\ \bibinfo {author} {\bibfnamefont {CF}~\bibnamefont {Roos}},\
  }\bibfield  {title} {\enquote {\bibinfo {title} {Direct observation of
  dynamical quantum phase transitions in an interacting many-body system},}\
  }\href@noop {} {\bibfield  {journal} {\bibinfo  {journal} {Physical review
  letters}\ }\textbf {\bibinfo {volume} {119}},\ \bibinfo {pages} {080501}
  (\bibinfo {year} {2017})}\BibitemShut {NoStop}%
\bibitem [{\citenamefont {Weidinger}\ \emph {et~al.}(2017)\citenamefont
  {Weidinger}, \citenamefont {Heyl}, \citenamefont {Silva},\ and\ \citenamefont
  {Knap}}]{weidinger2017dynamical}%
  \BibitemOpen
  \bibfield  {author} {\bibinfo {author} {\bibfnamefont {Simon~A}\ \bibnamefont
  {Weidinger}}, \bibinfo {author} {\bibfnamefont {Markus}\ \bibnamefont
  {Heyl}}, \bibinfo {author} {\bibfnamefont {Alessandro}\ \bibnamefont
  {Silva}}, \ and\ \bibinfo {author} {\bibfnamefont {Michael}\ \bibnamefont
  {Knap}},\ }\bibfield  {title} {\enquote {\bibinfo {title} {Dynamical quantum
  phase transitions in systems with continuous symmetry breaking},}\
  }\href@noop {} {\bibfield  {journal} {\bibinfo  {journal} {Physical Review
  B}\ }\textbf {\bibinfo {volume} {96}},\ \bibinfo {pages} {134313} (\bibinfo
  {year} {2017})}\BibitemShut {NoStop}%
\bibitem [{\citenamefont {Halimeh}\ \emph {et~al.}(2021)\citenamefont
  {Halimeh}, \citenamefont {Trapin}, \citenamefont {Van~Damme},\ and\
  \citenamefont {Heyl}}]{halimeh2021local}%
  \BibitemOpen
  \bibfield  {author} {\bibinfo {author} {\bibfnamefont {Jad~C}\ \bibnamefont
  {Halimeh}}, \bibinfo {author} {\bibfnamefont {Daniele}\ \bibnamefont
  {Trapin}}, \bibinfo {author} {\bibfnamefont {Maarten}\ \bibnamefont
  {Van~Damme}}, \ and\ \bibinfo {author} {\bibfnamefont {Markus}\ \bibnamefont
  {Heyl}},\ }\bibfield  {title} {\enquote {\bibinfo {title} {Local measures of
  dynamical quantum phase transitions},}\ }\href@noop {} {\bibfield  {journal}
  {\bibinfo  {journal} {Physical Review B}\ }\textbf {\bibinfo {volume}
  {104}},\ \bibinfo {pages} {075130} (\bibinfo {year} {2021})}\BibitemShut
  {NoStop}%
\bibitem [{\citenamefont {Cao}\ \emph {et~al.}(2020)\citenamefont {Cao},
  \citenamefont {Li}, \citenamefont {Zhong},\ and\ \citenamefont
  {Tong}}]{cao2020influence}%
  \BibitemOpen
  \bibfield  {author} {\bibinfo {author} {\bibfnamefont {Kaiyuan}\ \bibnamefont
  {Cao}}, \bibinfo {author} {\bibfnamefont {Wenwen}\ \bibnamefont {Li}},
  \bibinfo {author} {\bibfnamefont {Ming}\ \bibnamefont {Zhong}}, \ and\
  \bibinfo {author} {\bibfnamefont {Peiqing}\ \bibnamefont {Tong}},\ }\bibfield
   {title} {\enquote {\bibinfo {title} {Influence of weak disorder on the
  dynamical quantum phase transitions in the anisotropic xy chain},}\
  }\href@noop {} {\bibfield  {journal} {\bibinfo  {journal} {Physical Review
  B}\ }\textbf {\bibinfo {volume} {102}},\ \bibinfo {pages} {014207} (\bibinfo
  {year} {2020})}\BibitemShut {NoStop}%
\bibitem [{\citenamefont {Jordan}\ and\ \citenamefont
  {Wigner}(1993)}]{jordan1993paulische}%
  \BibitemOpen
  \bibfield  {author} {\bibinfo {author} {\bibfnamefont {Pascual}\ \bibnamefont
  {Jordan}}\ and\ \bibinfo {author} {\bibfnamefont {Eugene~Paul}\ \bibnamefont
  {Wigner}},\ }\bibfield  {title} {\enquote {\bibinfo {title} {{\"U}ber das
  paulische {\"a}quivalenzverbot},}\ }in\ \href@noop {} {\emph {\bibinfo
  {booktitle} {The Collected Works of Eugene Paul Wigner}}}\ (\bibinfo
  {publisher} {Springer},\ \bibinfo {year} {1993})\ pp.\ \bibinfo {pages}
  {109--129}\BibitemShut {NoStop}%
\bibitem [{\citenamefont {Bogoliubov}(1947)}]{bogoliubov1947theory}%
  \BibitemOpen
  \bibfield  {author} {\bibinfo {author} {\bibfnamefont {N}~\bibnamefont
  {Bogoliubov}},\ }\bibfield  {title} {\enquote {\bibinfo {title} {On the
  theory of superfluidity},}\ }\href@noop {} {\bibfield  {journal} {\bibinfo
  {journal} {J. Phys}\ }\textbf {\bibinfo {volume} {11}},\ \bibinfo {pages}
  {23} (\bibinfo {year} {1947})}\BibitemShut {NoStop}%
\bibitem [{\citenamefont {Lieb}\ \emph {et~al.}(1961)\citenamefont {Lieb},
  \citenamefont {Schultz},\ and\ \citenamefont {Mattis}}]{lieb1961two}%
  \BibitemOpen
  \bibfield  {author} {\bibinfo {author} {\bibfnamefont {Elliott}\ \bibnamefont
  {Lieb}}, \bibinfo {author} {\bibfnamefont {Theodore}\ \bibnamefont
  {Schultz}}, \ and\ \bibinfo {author} {\bibfnamefont {Daniel}\ \bibnamefont
  {Mattis}},\ }\bibfield  {title} {\enquote {\bibinfo {title} {Two soluble
  models of an antiferromagnetic chain},}\ }\href@noop {} {\bibfield  {journal}
  {\bibinfo  {journal} {Annals of Physics}\ }\textbf {\bibinfo {volume} {16}},\
  \bibinfo {pages} {407--466} (\bibinfo {year} {1961})}\BibitemShut {NoStop}%
\bibitem [{\citenamefont {Thouless}(1960)}]{thouless1960stability}%
  \BibitemOpen
  \bibfield  {author} {\bibinfo {author} {\bibfnamefont {David~J}\ \bibnamefont
  {Thouless}},\ }\bibfield  {title} {\enquote {\bibinfo {title} {Stability
  conditions and nuclear rotations in the hartree-fock theory},}\ }\href@noop
  {} {\bibfield  {journal} {\bibinfo  {journal} {Nuclear Physics}\ }\textbf
  {\bibinfo {volume} {21}},\ \bibinfo {pages} {225--232} (\bibinfo {year}
  {1960})}\BibitemShut {NoStop}%
\bibitem [{\citenamefont {Ring}\ and\ \citenamefont
  {Schuck}(2004)}]{ring2004nuclear}%
  \BibitemOpen
  \bibfield  {author} {\bibinfo {author} {\bibfnamefont {Peter}\ \bibnamefont
  {Ring}}\ and\ \bibinfo {author} {\bibfnamefont {Peter}\ \bibnamefont
  {Schuck}},\ }\href@noop {} {\emph {\bibinfo {title} {The nuclear many-body
  problem}}}\ (\bibinfo  {publisher} {Springer Science \& Business Media},\
  \bibinfo {year} {2004})\BibitemShut {NoStop}%
\bibitem [{\citenamefont {Zhong}\ and\ \citenamefont
  {Tong}(2011)}]{zhong2011loschmidt}%
  \BibitemOpen
  \bibfield  {author} {\bibinfo {author} {\bibfnamefont {Ming}\ \bibnamefont
  {Zhong}}\ and\ \bibinfo {author} {\bibfnamefont {Peiqing}\ \bibnamefont
  {Tong}},\ }\bibfield  {title} {\enquote {\bibinfo {title} {Loschmidt echo of
  a two-level qubit coupled to nonuniform anisotropic x y chains in a
  transverse field},}\ }\href@noop {} {\bibfield  {journal} {\bibinfo
  {journal} {Physical Review A}\ }\textbf {\bibinfo {volume} {84}},\ \bibinfo
  {pages} {052105} (\bibinfo {year} {2011})}\BibitemShut {NoStop}%
\bibitem [{\citenamefont {Barouch}\ and\ \citenamefont
  {McCoy}(1971)}]{barouch1971statistical2}%
  \BibitemOpen
  \bibfield  {author} {\bibinfo {author} {\bibfnamefont {Eytan}\ \bibnamefont
  {Barouch}}\ and\ \bibinfo {author} {\bibfnamefont {Barry~M}\ \bibnamefont
  {McCoy}},\ }\bibfield  {title} {\enquote {\bibinfo {title} {Statistical
  mechanics of the x y model. ii. spin-correlation functions},}\ }\href@noop {}
  {\bibfield  {journal} {\bibinfo  {journal} {Physical Review A}\ }\textbf
  {\bibinfo {volume} {3}},\ \bibinfo {pages} {786} (\bibinfo {year}
  {1971})}\BibitemShut {NoStop}%
\bibitem [{\citenamefont {Sachdev}(1999)}]{sachdev1999quantum}%
  \BibitemOpen
  \bibfield  {author} {\bibinfo {author} {\bibfnamefont {Subir}\ \bibnamefont
  {Sachdev}},\ }\bibfield  {title} {\enquote {\bibinfo {title} {Quantum phase
  transitions},}\ }\href@noop {} {\bibfield  {journal} {\bibinfo  {journal}
  {Physics world}\ }\textbf {\bibinfo {volume} {12}},\ \bibinfo {pages} {33}
  (\bibinfo {year} {1999})}\BibitemShut {NoStop}%
\bibitem [{\citenamefont {Kitaev}(2001)}]{kitaev2001unpaired}%
  \BibitemOpen
  \bibfield  {author} {\bibinfo {author} {\bibfnamefont {A~Yu}\ \bibnamefont
  {Kitaev}},\ }\bibfield  {title} {\enquote {\bibinfo {title} {Unpaired
  majorana fermions in quantum wires},}\ }\href@noop {} {\bibfield  {journal}
  {\bibinfo  {journal} {Physics-uspekhi}\ }\textbf {\bibinfo {volume} {44}},\
  \bibinfo {pages} {131} (\bibinfo {year} {2001})}\BibitemShut {NoStop}%
\bibitem [{\citenamefont {Lieb}\ and\ \citenamefont
  {Robinson}(1972)}]{lieb1972finite}%
  \BibitemOpen
  \bibfield  {author} {\bibinfo {author} {\bibfnamefont {Elliott~H}\
  \bibnamefont {Lieb}}\ and\ \bibinfo {author} {\bibfnamefont {Derek~W}\
  \bibnamefont {Robinson}},\ }\bibfield  {title} {\enquote {\bibinfo {title}
  {The finite group velocity of quantum spin systems},}\ }in\ \href@noop {}
  {\emph {\bibinfo {booktitle} {Statistical mechanics}}}\ (\bibinfo
  {publisher} {Springer},\ \bibinfo {year} {1972})\ pp.\ \bibinfo {pages}
  {425--431}\BibitemShut {NoStop}%
\bibitem [{\citenamefont {Calabrese}\ \emph {et~al.}(2012)\citenamefont
  {Calabrese}, \citenamefont {Essler},\ and\ \citenamefont
  {Fagotti}}]{calabrese2012quantum}%
  \BibitemOpen
  \bibfield  {author} {\bibinfo {author} {\bibfnamefont {Pasquale}\
  \bibnamefont {Calabrese}}, \bibinfo {author} {\bibfnamefont {Fabian~HL}\
  \bibnamefont {Essler}}, \ and\ \bibinfo {author} {\bibfnamefont {Maurizio}\
  \bibnamefont {Fagotti}},\ }\bibfield  {title} {\enquote {\bibinfo {title}
  {Quantum quench in the transverse field ising chain: I. time evolution of
  order parameter correlators},}\ }\href@noop {} {\bibfield  {journal}
  {\bibinfo  {journal} {Journal of Statistical Mechanics: Theory and
  Experiment}\ }\textbf {\bibinfo {volume} {2012}},\ \bibinfo {pages} {P07016}
  (\bibinfo {year} {2012})}\BibitemShut {NoStop}%
\bibitem [{\citenamefont {Anderson}(1967)}]{anderson1967infrared}%
  \BibitemOpen
  \bibfield  {author} {\bibinfo {author} {\bibfnamefont {Philip~W}\
  \bibnamefont {Anderson}},\ }\bibfield  {title} {\enquote {\bibinfo {title}
  {Infrared catastrophe in fermi gases with local scattering potentials},}\
  }\href@noop {} {\bibfield  {journal} {\bibinfo  {journal} {Physical Review
  Letters}\ }\textbf {\bibinfo {volume} {18}},\ \bibinfo {pages} {1049}
  (\bibinfo {year} {1967})}\BibitemShut {NoStop}%
\bibitem [{\citenamefont {Porro}\ and\ \citenamefont
  {Duguet}(2022)}]{porro2022off}%
  \BibitemOpen
  \bibfield  {author} {\bibinfo {author} {\bibfnamefont {A}~\bibnamefont
  {Porro}}\ and\ \bibinfo {author} {\bibfnamefont {T}~\bibnamefont {Duguet}},\
  }\bibfield  {title} {\enquote {\bibinfo {title} {On the off-diagonal wick’s
  theorem and onishi formula: Alternative and consistent approach to
  off-diagonal operator and norm kernels},}\ }\href@noop {} {\bibfield
  {journal} {\bibinfo  {journal} {The European Physical Journal A}\ }\textbf
  {\bibinfo {volume} {58}},\ \bibinfo {pages} {197} (\bibinfo {year}
  {2022})}\BibitemShut {NoStop}%
\end{thebibliography}


%
\newpage
\appendix
\section{Shape of BCS-matrix \(G\) without disorder}\label{sec:g_shape}
With homogeneous external field BCS-matrix \(G\), determining wave function as in \cref{eq:psi_t_full}, has the following form in the eigenbasis of \(\ham_1\):
\begin{equation*}
\begin{pmatrix}
\ddots & \vdots & \vdots & \vdots & \vdots & \vdots \\
\dots & 0 & 0 & 0 & 0 & 0 \\
\dots & 0 & 0 & -G_{N-2, N-3} & 0 & 0 \\
\dots & 0 & G_{N-2, N-3} & 0 & 0 & 0 \\
\dots & 0 & 0 & 0 & 0 & -G_{N, N-1} \\
\dots & 0 & 0 & 0 & G_{N, N-1} & 0 \\
\end{pmatrix}
\end{equation*}
In a homogeneous external field modes with momenta differing by sign (\(k\) and \(-k\) for all \(k\)) have the same energy. This leads to the degeneracy \(E_\alpha = E_{\alpha-1}\) for even indices \(\alpha\). This means that in this homogeneous case matrix \(G\) only pairs excitations of the same energy. That is, we have only Cooper pairs with both components of the same energy and opposite momenta. For example, \(G_{N, N-1}\) corresponds to \(\eta_{N}^\dagger \eta_{N-1}^\dagger\) pairs of excitations with energies \(E_{N-1} = E_N = E_{min}\), \(G_{N-2, N-3}\) corresponds to next-to-lowest energy pairs of excitations and so on.
\section{Asymptotics of commutator series}\label{sec:commut_asympt}
Here we prove that the sum of commutators (henceforth \(S_c\)) in the second brackets of \cref{eq:bch_full} has asymptotics \(\mathcal{O}(N^{-3})\) for \(N \rightarrow \infty\) and thus vanishes in thermodynamic limit. First, we look at a chain of commutators of length \(k\):
\begin{equation}\label{eq:k_comm_chain}
    \underbrace{[\mathcal{G}, [\mathcal{G}, [\dots [\mathcal{T}, [\mathcal{G}, \mathcal{T}]]]]]}_{k \text{ commutators}}
\end{equation}
There are \(2^k\) of such chains in \(S_c\) (some are zero). Also, each one enters \(S_c\) with a coefficient whose absolute value is less than 1. Therefore, we can bound the sum of length-\(k\) chains of commutators with \(2^k M\), where \(M\) is the upper bound for one such chain. To obtain M we will group all elements of \(\mathcal{G}\) and \(T\) by \textit{types}: \(\eta \eta\) will stand for \(\eta_\alpha \eta_\beta\) with any \(\alpha, \beta\), \(\eta^\dagger \eta^\dagger\) - for \(\eta_\alpha^\dagger \eta_\beta^\dagger\) and \(\eta^\dagger \eta\) - for both \(\eta_\alpha^\dagger \eta_\beta\) or \(\eta_\alpha \eta_\beta^\dagger\). Note that a commutator of elements of two \textit{types} again belongs to one of the \textit{types} (or is \(0\)), as shown in \cref{tab:ferm_pair_comm_short} (for a detailed derivation see \cref{tab:ferm_pair_comm}). We will track how the scalar prefactor in front of elements of each type and with each pair of indices changes as we consecutively compute all \(k\) correlators in a chain. We start with \(\mathcal{T}\), whose all elements are of \(\eta^\dagger \eta^\dagger\) \textit{type}. With any pair of indices \(\alpha, \beta\) elements \(\eta_\alpha^\dagger \eta_\beta^\dagger\) in \(\mathcal{T}\) have a prefactor scaling as \(\mathcal{O}(N^{-3})\).  
\begin{table}[htb!]
\caption{At the intersection of a column titled with a type A and a row titled with a type B is a type of \([A, B]\)}
\label{tab:ferm_pair_comm_short}
\begin{center}
\begin{tabular}{ |c||c|c|c| } 
 \hline
 \phantom                      & \(\eta \eta\)         & \(\eta^\dagger \eta\)         & \(\eta^\dagger \eta^\dagger\) \\ \hline \hline
 \(\eta \eta\)                 & 0                     & \(\eta \eta\)                 & \(\eta^\dagger \eta\) \\ \hline
 \(\eta^\dagger \eta\)         & \(\eta \eta\)         & \(\eta^\dagger \eta\)         & \(\eta^\dagger \eta^\dagger\) \\ \hline
 \(\eta^\dagger \eta^\dagger\) & \(\eta \eta^\dagger\) & \(\eta^\dagger \eta^\dagger\) & 0 \\
 \hline
\end{tabular}
\end{center}
\end{table}
Further, we analyze \cref{eq:g_mat_appr_transform}, to understand, what \textit{types} enter in \(\mathcal{G}\) and how their prefactors scale. First, matrix \(G\) elements scale at most as \(\mathcal{O}(N)\) (see \cref{eq:g_approx_deriv}). Second, each element of \(a_1\) scales at most as \(\mathcal{O}(N^{-1})\);  each element of \(a_2\) scales at most as \(\mathcal{O}(N^{-3})\)  - see \cref{eq:a_asymptot}. Now we can bound from above the scaling of products of \(G, a_1, a_2\). Elements of \(a_1 G, G a_1\) scale at most as \(\mathcal{O}(N) \cdot \mathcal{O}(N^{-1}) = \mathcal{O}(1)\) and \(a_2 G, G a_2\) as \(\mathcal{O}(N) \cdot \mathcal{O}(N^{-3}) = \mathcal{O}(N^{-2})\), because \(G\) has only one element per each row/column so that each element of \(a_1 G, G a_1\) is just a product of one element of \(G\) and one element of \(a_1\) (similarly for \(a_2\)). On the contrary, to bound from above scaling of matrices \(a_1 G a_1, a_2 G a_1, a_1 G a_2, a_2 G a_2\) we need to multiply scaling functions of corresponding matrices and additionally multiply the result by \(N\) because matrix multiplication of two generic matrices leads to each element of the resulting matrix being a sum of \(N\) products of elements of initial matrices. Consequently, we have scalings \(a_1 G a_1 - \mathcal{O}(1); \, a_2 G a_1, a_1 G a_2 - \mathcal{O}({N^{-2}}); \, a_2 G a_2 - \mathcal{O}({N^{-4}})\). To sum up, \(\mathcal{G}\) consists of summands of:

\begin{enumerate}
    \item \textit{Type} \(\eta^\dagger \eta^\dagger\) - special (one element per row/column) matrix \(G\), scaling as \(\mathcal{O}(N)\) and generic matrices \(Ga_1, a_1 G, a_1 G a_1\) scaling as \(\mathcal{O}(1)\)
    \item \textit{Type} \(\eta^\dagger \eta\) - generic matrices \(Ga _2, a_2 G, a_2 G a_1, a_1 G a_2\), scaling as \(\mathcal{O}(N^{-2})\)
    \item \textit{Type} \(\eta \eta\) - generic matrix \(a_2 G a_2\) scaling as \(\mathcal{O}(N^{-4})\).
\end{enumerate}

From \cref{tab:ferm_pair_comm} we see that for two pairs of fermionic operators to have a non-zero commutator they must have at least one element with a common index. Thus, for example, bound on asymptotics of \([A_{\alpha\beta} \eta_\alpha^\dagger \eta_\beta, B_{\gamma\delta} \eta_\gamma^\dagger \eta_\delta]\) is a product of asymptotics of elements of the matrix \(A\) times asymptotics of elements of the matrix \(B\) and times \(\mathcal{O}(N)\) (for each element \(A_{\alpha\beta} \eta_\alpha^\dagger \eta_\beta\) there are \(\mathcal{O}(N)\) elements of \(B_{\beta\gamma} \eta_\beta^\dagger \eta_\gamma\) with at least one common index). The same reasoning applies to other \textit{types}, except when \(A\) or \(B\) is \(G\), which is diagonal. In that case bound on asymptotics of \([A_{\alpha\beta} \eta_\alpha^\dagger \eta_\beta, B_{\gamma\delta} \eta_\gamma^\dagger \eta_\delta]\) is just a product of asymptotics of elements of \(A\) times asymptotics of elements of \(B\). \\

Combining all of the above we come to:

\begin{equation}\label{eq:type_transitions}
    \begin{cases}
    \eta^\dagger \eta^\dagger \xrightarrow[\eta^\dagger\eta^\dagger]{N^{+1}} 0 \\
    \eta^\dagger \eta^\dagger \xrightarrow[\eta^\dagger\eta]{N^{-1}} \eta^\dagger \eta^\dagger \\
    \eta^\dagger \eta^\dagger \xrightarrow[\eta\eta]{N^{-3}} \eta^\dagger \eta \\
    \end{cases}
    \begin{cases}
    \eta^\dagger \eta \xrightarrow[\eta^\dagger\eta^\dagger]{N^{+1}} \eta^\dagger \eta^\dagger \\
    \eta^\dagger \eta \xrightarrow[\eta^\dagger\eta]{N^{-1}} \eta^\dagger \eta \\
    \eta^\dagger \eta \xrightarrow[\eta\eta]{N^{-3}} \eta \eta \\
    \end{cases}
    \begin{cases}
    \eta \eta \xrightarrow[\eta^\dagger\eta^\dagger]{N^{+1}} \eta^\dagger \eta \\
    \eta \eta \xrightarrow[\eta^\dagger\eta]{N^{-1}} \eta \eta \\
    \eta \eta \xrightarrow[\eta\eta]{N^{-3}} 0 \\
    \end{cases}
\end{equation}
On the left side of an arrow we have a \textit{type} with which we start, under the arrow we have a \textit{type} with which we commute the first \textit{type}, above the arrow - a scalar prefactor we gain after the commutation, and on the right side of the arrow - the \textit{type} resulting from the commutation. To illustrate the use of \cref{eq:type_transitions}, suppose after \(k\) commutations we have \(A_{\alpha\beta} \eta_\alpha^\dagger \eta_\beta^\dagger\) and for all \(\alpha, \beta\) \(A_{\alpha\beta} \in \mathcal{O}(N^p)\). After the next commutation with \(\mathcal{G}\) we obtain
\begin{equation}\label{eq:type_transitions_example}
    A_{\alpha\beta} \eta_\alpha^\dagger \eta_\beta^\dagger \rightarrow B_{\alpha\beta} \eta_\alpha^\dagger \eta_\beta^\dagger + C_{\alpha\beta} \eta_\alpha^\dagger \eta_\beta
\end{equation} where for all \(\alpha, \beta\) \(B_{\alpha\beta} \in \mathcal{O}(N^{p-1})\) and \(C_{\alpha\beta} \in \mathcal{O}(N^{p-3})\). 
This information can be presented in the form of a finite state machine:
\begin{tikzpicture}[shorten >=1pt,node distance=3.5cm,auto, initial text = \(N^{-3}\)]
  \tikzstyle{every state}=[fill={rgb:black,1;white,10}]

  \node[state,initial]   (dd)                      {\(\eta^\dagger\eta^\dagger\)};
  \node[state] (dn) [below left of=dd]  {\(\eta^\dagger \eta\)};
  \node[state,accepting]           (zero) [below right of=dn]     {\(0\)};
  \node[state] (nn) [below right of=dd] {\(\eta\eta\)};

  \path[->]
  (dd) edge [bend right] node {\(N^{-3}\)} (dn)
       edge              node {} (zero)
       edge [loop right] node {\(N^{-1}\)} (   )
  (dn) edge [loop left]  node {\(N^{-1}\)} (   )
       edge [bend right] node {\(N^{+1}\)} (dd)
       edge [bend right] node {\(N^{-3}\)} (nn)
  (nn) edge [loop right] node {\(N^{-1}\)} (   )
       edge [bend right] node {\(N^{+1}\)} (dn)
       edge              node {}  (zero);
\end{tikzpicture}
Here we start with the topmost state \(\eta^\dagger \eta^\dagger\) scaling as \(\mathcal{O}(N^{-3})\) (these are elements of \(\mathcal{T}\)). The machine has one arrow cycles, changing asymptotics by \(\mathcal{O}(N^{-1})\), two arrow cycles, changing asymptotics by \(\mathcal{O}(N^{-2})\) and their combinations.Therefore, after \(k\) steps we either come to \(0\) or acquire prefactor at most \(\mathcal{O}(N^{-3}) \cdot \mathcal{O}(N^{-k})\). Hence, contribution of each one length-\(k\) chain to a coefficient in front of \(\eta^{(\dagger)}_i \eta^{(\dagger)}_j\), with \textit{type} and indices \(i, j\) fixed, is of order \(\mathcal{O}(N^{-k-3})\). After each commutation calculation may split into \(2\) or \(3\) branches (see \cref{eq:type_transitions,eq:type_transitions_example}), and overall there are \(2^{k}\) chains (see \cref{eq:k_comm_chain}). Therefore, contribution of all length-\(k\) chains  \cref{eq:k_comm_chain} to \(S_c\) is of order \(6^k \mathcal{O}(N^{-k-3})\). Summing over chain lengths from \(0\) to \(\infty\) we bound scaling of \(S_c\) with \(\mathcal{O}(N^{-3})\), which is what we wanted.
\section{Bounds on change of fermionic correlators}\label{sec:correl_bounds}
First, we will prove that introduction of disorder in the transverse fields changes all pairwise fermionic correlators in the energy basis very little, that is:
\begin{equation}
    \forall \alpha, \beta \; \Delta \langle \eta_\alpha \eta_\beta \rangle, \Delta \langle \eta_\alpha^\dagger \eta_\beta \rangle, \Delta \langle \eta_\alpha \eta_\beta^\dagger \rangle, \Delta \langle \eta_\alpha^\dagger \eta_\beta^\dagger \rangle = \mathcal{O}(N^{-1}).
\end{equation}
We will do that, expressing them through BCS-matrix \(G\) and employing our results for \(\Delta G_{\alpha\beta}\).
After that, as a corollary, we will obtain similar scaling results for change in correlators in \textit{position basis}. These are used in the proof of \cref{theorem:correl}.
\subsection{Fermions in energy basis}
We define:
\begin{equation} \label{eq:gamma_full_def}
    \Gamma = \begin{pmatrix}
    \Gamma_1 & \Gamma_2 \\
    \Gamma_3 & \Gamma_4
    \end{pmatrix},
\end{equation}
where \(\Gamma_{\alpha\beta}\) are fermionic correlators, whose expressions through \(G\) are derived in \cite{porro2022off}:
\begin{equation}\label{eq:etas_st_structure}
\begin{cases}
    (\Gamma_1)_{\alpha\beta} = (1 + \Tilde G^\dagger \Tilde G)_{\alpha\beta}^{-1} = \langle\eta_\alpha \eta_\beta^\dagger\rangle 
\\
    (\Gamma_2)_{\alpha\beta} = \Tilde G_{\alpha\gamma} (1 + \Tilde G^\dagger \Tilde G)_{\gamma\beta}^{-1} = \langle\eta_\alpha \eta_\beta\rangle 
\\
    (\Gamma_3)_{\alpha\beta} = (1 + \Tilde G^\dagger \Tilde G)_{\alpha\gamma}^{-1} \Tilde G_{\gamma\beta}^\dagger = \langle\eta_\alpha^\dagger \eta_\beta^\dagger\rangle 
\\
    (\Gamma_4)_{\alpha\beta} = \Tilde G_{\alpha\gamma} (1 + \Tilde G^\dagger \Tilde G)_{\gamma\delta}^{-1} \Tilde G_{\delta\beta}^\dagger = \langle\eta_\alpha^\dagger \eta_\beta\rangle 
\end{cases}
\end{equation}
Now, using \cref{eq:dG_matform} for the change of BCS-matrix \(G\) we aim to find scaling of \(\Delta \Gamma_{\alpha\beta}\). Designating the perturbed matrix \(G\) as \(\Tilde G = G + \Delta G\) we can obtain for the corrections to \(\Gamma_1, \Gamma_2, \Gamma_3, \Gamma_4\):
\begin{equation}
    (1 + \Tilde G^\dagger \Tilde G)^{-1} = (1 + G^\dagger G + G^\dagger \Delta G + \Delta G^\dagger G)^{-1}.
\end{equation}
Now we designate:
\begin{equation}\label{eq:a_def}
    A = 1 + G^\dagger G \, , \, \Delta A = G^\dagger \Delta G + \Delta G^\dagger G
\end{equation}
and rewrite 
\begin{equation}\label{eq:geom_ser1}
    (A + \Delta A)^{-1} = \left(\sum_{n=0}^\infty (-A^{-1}\Delta A)^n \right)A^{-1}.
\end{equation}
We can designate:
\begin{equation}
    S = \sum\limits_{n=1}^\infty (-A^{-1}\Delta A)^n
\end{equation}
and obtain:
\begin{equation}
\begin{aligned}\label{eq:geom_ser2}\nonumber
    (1 &+ \Tilde G^\dagger \Tilde G)^{-1} - (1 + G^\dagger G)^{-1} \\ &= \left(\sum_{n=1}^\infty (-A^{-1}\Delta A)^n \right)A^{-1} = SA^{-1}.
\end{aligned}
\end{equation}
Using this result we arrive at:
\begin{equation}\label{eq:gamma1_corrections}
    \Delta \Gamma_1 = S A^{-1},
\end{equation}
\begin{equation}\label{eq:gamma4_corrections}
\Delta \Gamma_4 = \Delta G A^{-1} G^\dagger + G A^{-1} \Delta G^\dagger + 
 G S A^{-1} G^\dagger,
\end{equation}
\begin{equation}\label{eq:gamma2_corrections}
    \Delta \Gamma_2 = \Delta G A^{-1} + G SA^{-1},
\end{equation}
\begin{equation}\label{eq:gamma3_corrections}
    \Delta \Gamma_3 = A^{-1} \Delta G^\dagger + SA^{-1} G^\dagger.
\end{equation}
In \cref{eq:geom_ser1,eq:geom_ser2} we used the inverse of \(A\), which exists since \(A\) is positive-definite. To ensure that the series \(S\) converges, we also used the fact that:
\begin{equation} \label{eq:normAdA}
\norm{A^{-1} \Delta A} < 1.    
\end{equation}
The letter follows from:
\begin{equation}\label{eq:s_ineq}
    \norm{A^{-1} \Delta A} \leq \norm{A^{-1}G^\dagger \Delta G} + \norm{A^{-1}\Delta G^\dagger G}.
\end{equation}
To obtain bounds on both terms, we need to bound each of the matrices in the equations. Their scaling behavior is presented in \cref{tab:e_g_asympt}. In this and the next table by \(\alpha, \beta \approx N\) we mean \(\abs{\alpha-N}, \abs{\beta-N} = \mathcal{O}(1)\).
\begin{table}[h]
\caption{Matrix asymptotics}
\label{tab:e_g_asympt}
\begin{center}
\begin{tabular}{ |L||L|L|L|L|L|L| } 
 \hline
 \phantom & V_{\alpha\beta} & E_\alpha - E_\beta & G_{\alpha,\alpha+1} & A^{-1}_{\alpha} & \Delta G_{\alpha\beta} & S_{\alpha\beta} \\ \hline \hline
 \alpha,\beta \approx N & N^{-3} & N^{-2} & N & N^{-2} & 1 & N^{-1}\\ \hline
 \text{other } \alpha, \beta & 0 & N^{-1} \text{ or } 1 & 1 & 1 & N^{-2} & N^{-2}\\
 \hline
\end{tabular}
\end{center}
\end{table}

The first column is just our choice of \(V\). The second column reflects the following: when \(\abs{\alpha-\beta} = \Theta(N)\), then \(\abs{E_\alpha - E_\beta} = \Theta(E_{max} - E_{min}) = \mathcal{O}(1)\); when  \(\abs{\alpha-\beta} \ll N\) and both indices are far from \(N\), then \(\abs{E_\alpha - E_\beta} \approx E_\alpha' (k) \diff{k} = \Theta(N^{-1})\). Finally, when \(\abs{\alpha-N}, \abs{\beta-N} = \mathcal{O}(1)\), then \(\abs{E_\alpha - E_\beta} = \mathcal{O}(E_\alpha'' (\pi) \diff{k^2}) = \mathcal{O}(N^{-2})\). The third column follows from \cref{eq:g_asymptot}. The fourth column is a consequence of the third column and definition of \(A\) \cref{eq:a_def}. For the fifth column we use \cref{eq:dG_matform,eq:a_perturb_general}:
\begin{equation}
    \Delta G_{\alpha\beta} = 2\frac{(V_1)_{\alpha\beta} G_{\beta\beta} - (V_1)_{\alpha\beta}^* G_{\alpha\alpha}}{E_\alpha - E_\beta} + \mathcal{O}(N^{-2}),
\end{equation}
and then apply the results from the first three columns. For the sixth column we use the columns 3-5 to both summands in \cref{eq:s_ineq}. This proves that \(S_{\alpha\beta}\) converges and gives the asymptotics of \(S_{\alpha\beta}\) from the table. 

Applying results from the table to  \cref{eq:gamma1_corrections,eq:gamma2_corrections,eq:gamma3_corrections,eq:gamma4_corrections} we derive the following for the fermionic correlators:
\begin{table}[h]
\caption{Matrix asymptotics}
\label{tab:correl_asympt}
\begin{center}
\begin{tabular}{ |L||L|L|L|L| } 
 \hline
 \phantom & (\Delta \Gamma_1)_{\alpha\beta} & (\Delta \Gamma_2)_{\alpha\beta} & (\Delta \Gamma_3)_{\alpha\beta} & (\Delta \Gamma_4)_{\alpha\beta}  \\ \hline \hline
 \alpha, \beta \approx N & N^{-1} & N^{-2} & N^{-2} & N^{-3} \\ \hline
 \text{other } \alpha, \beta & N^{-2} & N^{-2} & N^{-2} & N^{-2}\\
 \hline
\end{tabular}
\end{center}
\end{table}
\subsection{Fermions in the position basis}
Fermionic operators in the position basis and the energy basis are connected through a unitary transform. Therefore, for pairs of fermionic correlators we can write:
\begin{equation}\label{eq:correl_unitary}
    \Delta \mathcal{C} = U \Delta \Gamma U^\dagger,  
\end{equation}
where
\begin{equation}\label{eq:cs_st_structure}
\mathcal{C} = \begin{pmatrix}
    \mathcal{C}_1 & \mathcal{C}_2 \\
    \mathcal{C}_3 & \mathcal{C}_4
    \end{pmatrix} \; , \;
    \begin{cases}
    (\mathcal{C}_1)_{ij} = \langle c_i c_j^\dagger \rangle
\\
    (\mathcal{C}_2)_{ij} = \langle c_i c_j \rangle
\\
    (\mathcal{C}_3)_{ij} = \langle c_i^\dagger c_j^\dagger \rangle
\\
    (\mathcal{C}_4)_{ij} = \langle c_i^\dagger c_j \rangle.
\end{cases}
\end{equation}
Since \(\Delta \Gamma\) is a hermitian matrix, it is diagonalizable. Further, \(U\) is unitary, so we can bound the maximal entry of \(\Delta \mathcal{C}\) with the maximal eigenvalue of \((\Delta \Gamma)_{ij}\):
\begin{equation}
    \max\limits_{ij} \abs{\left(U \Delta \Gamma U^\dagger\right)_{ij}} \leq \max\limits_{\substack{\norm{v} = 1 \\ \norm{w} = 1}} \abs{v^\dagger (\Delta \Gamma) w} = \abs{\lambda_{max}}. 
\end{equation}
In its turn, the maximal eigenvalue can be bounded as:
\begin{equation}
    \lambda_{max} \leq \max_j \sum_i \abs{(\Delta \Gamma)_{ij}}.
\end{equation}
There are \(\mathcal{O}(N)\) entries in each row with the values scaling as \(\mathcal{O}(N^{-2})\) and \(\mathcal{O}(1)\) entries with the values \(\mathcal{O}(N^{-1})\) - see \cref{tab:correl_asympt}. Therefore, for any column of \(\Delta \Gamma\) the sum has asymptotics:
\begin{equation}
    \sum_i \abs{(\Delta \Gamma)_{ij}} = \mathcal{O}(N) \mathcal{O}(N^{-2}) + \mathcal{O}(1) \mathcal{O}(N^{-1}) = \mathcal{O}(N^{-1}).
\end{equation}

\section{Finite disorder}\label{sec:app:finitedisorder}

Our theory operates with perturbations of order \(\nu = \Theta(N^{-3})\). For such amplitudes we have theoretically established that new DQPTS already emerge but the change in fermionic and spin-spin correlators remains suppressed in the thermodynamic limit. In practice for \(N = 1000\) it means perturbation amplitude \(\nu \sim 10^{-4} - 10^{-5}\) - see \cref{fig:gij_numerical_disord}. Our theoretical bounds on change in correlators are not tight, because simulations show that correlators remain unchanged at much stronger perturbations, up to \(\nu \sim 10^{-2}\) - see \cref{fig:strong_dis}. At these amplitudes of random perturbations, correlators rapidly change their oscillation frequency to a new one, which does not correspond to any Fisher zero crossing we had before. 
\begin{figure}[h!]
    \centering
    \includegraphics[width=0.5\textwidth]{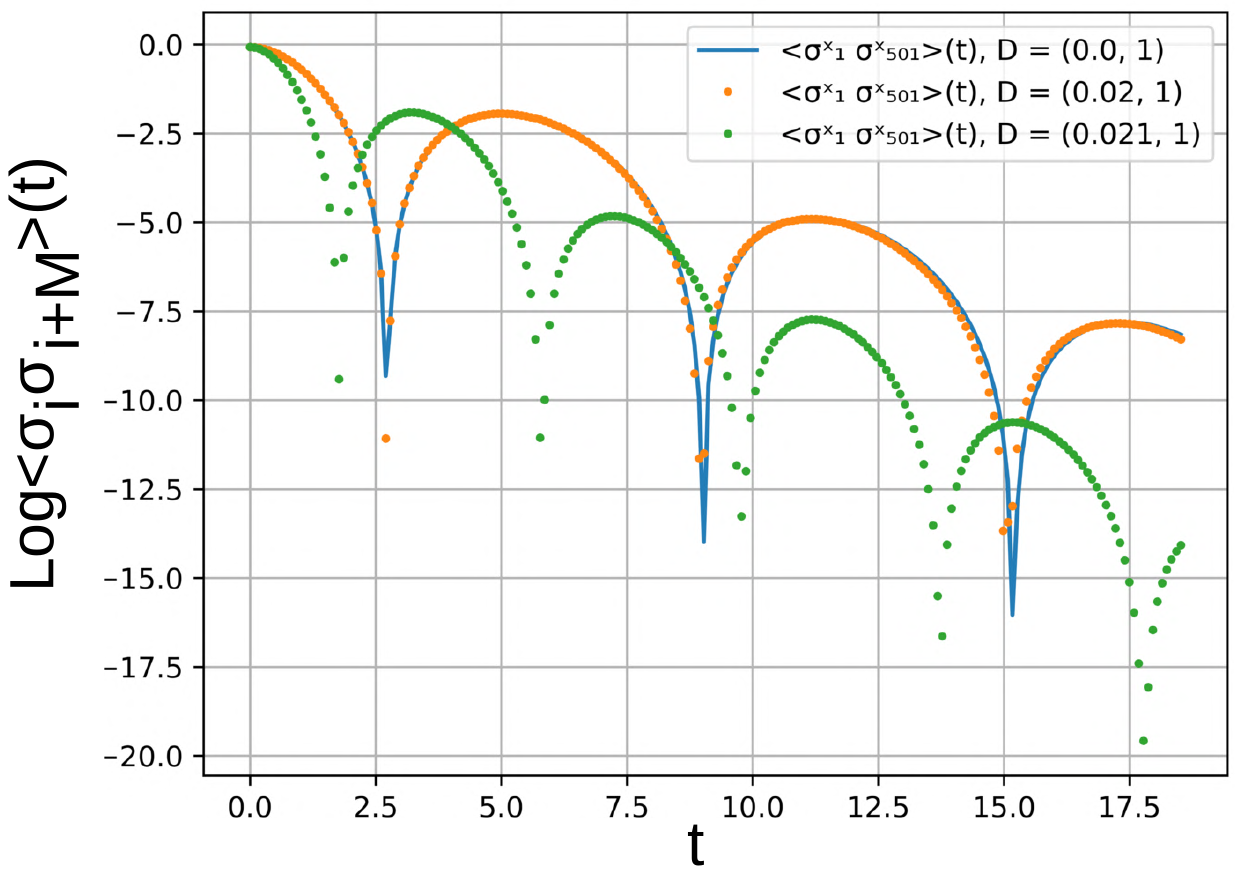}
    \caption{Spin-spin correlators, calculated for a quench with initial field \(h^0 = 0.5\), and final disordered field with a mean value \(h^1 = 1.5\) and a disorder amplitude \(D\), chain length \(N = 1000\). The correlators calculated for spins divided by \(d = N / 2 = 500\) spins. Presented cases are: for a homogeneous external magnetic field \(h^1\) (blue line), for a weakly disordered field that does not change correlators (orange dots), and for a field with larger disorder that changes correlators (green dots).}
    \label{fig:strong_dis}
\end{figure}
\begin{figure*}[!htb]
  \centering
    \includegraphics[width=0.8\textwidth]{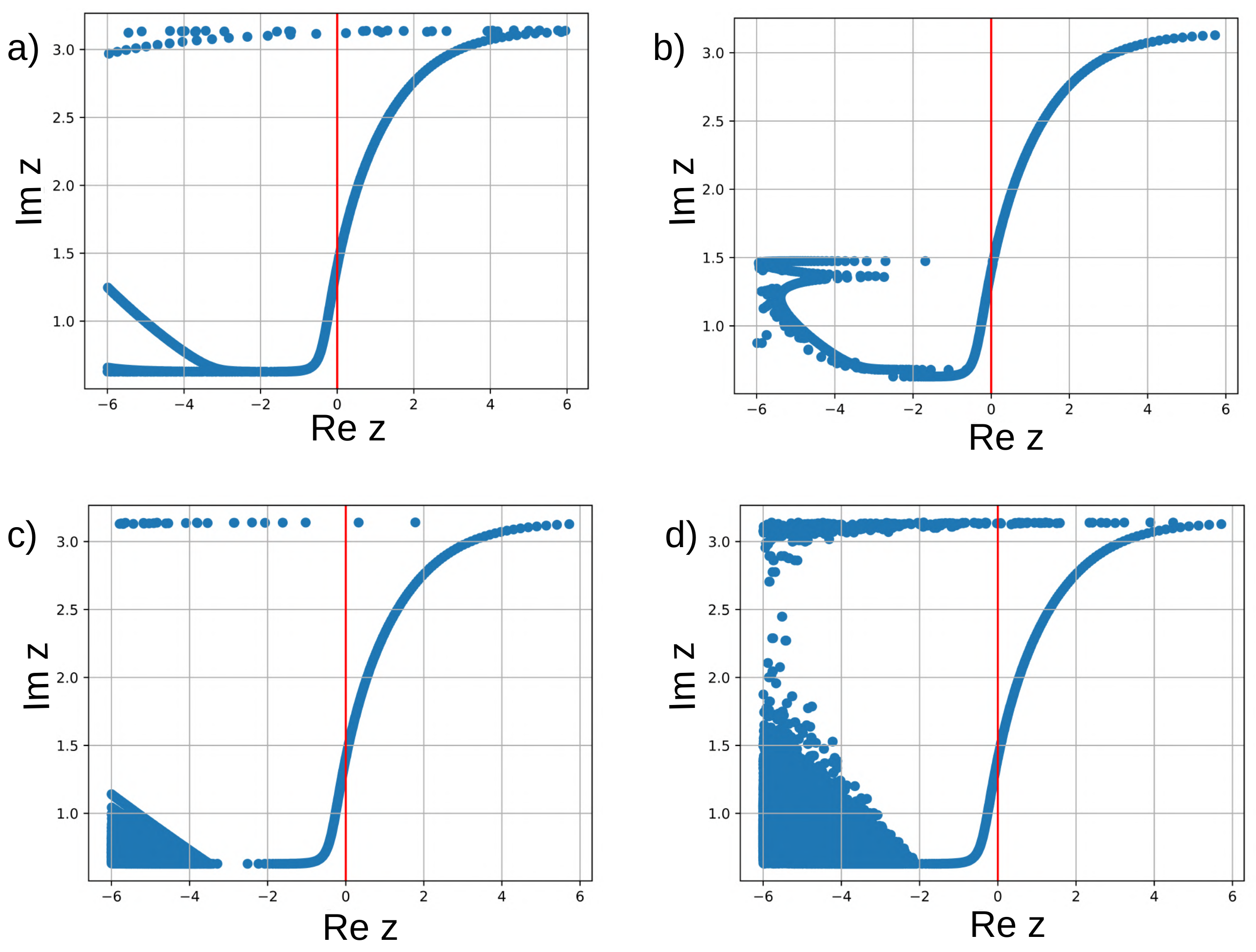}
    \caption{Fisher zeros for a quench with initial fields \(h^0 = 0\) and post-quench magnetic fields with the mean value \( h^1 = 1.5\) and with a weak harmonic or random perturbation \(\Delta h^1_n\), where \(n\) is a spin coordinate. Perturbation amplitude is \(D = 0.001\). \textbf{(a)} Perturbation with a small wave vector: \(\Delta h^1_n = D \cos(q n) \, , \, q = \frac{2 \pi}{N}\) \textbf{(b)} Perturbation with a large wave vector \(\Delta h^1_n = D \cos(q n) \, , \, q \approx \frac{\pi}{2}\) \textbf{(c)} Single site perturbation \(\Delta h^1_{n_0} = D \, , \, \Delta h_{n \neq n_0} = 0\) \textbf{(d)} Perturbation by a disordered potential \(\Delta h^1_n \in \mathcal{U}_{[-D,D]}\)}
    \label{fig:zeros_dif_freq}
\end{figure*}
\section{Perturbations of different wavelength}
In this section, we want to numerically study the effect of perturbations periodic in space with different wavelengths. It illustrates a statement made in the main text and, in particular, in \cref{subs:phys_pic}, that at a sufficiently low amplitude of the field modulation, it is only effective at changing Fisher zeros (and Loschmidt echo), if it has large wavelength components. In \cref{fig:zeros_dif_freq} we see, that when field modulation has Fourier components with small wave vectors, there are new Fisher zeros close to the imaginary axis. On the contrary, when modulation has only large wave vector components, no new Fisher zeros appear close to the imaginary axis. In the latter case, Fisher zeros far from the imaginary axis do not produce any new non-analyticities in the Loschmidt echo.
\begin{table*}[!htb]
\caption{At the intersection of a column titled with an operator A and a row titled with an operator B is  \([A, B]\)}
\label{tab:ferm_pair_comm}
\begin{center}
\begin{tabular}{ |c||c|c|c|c| } 
 \hline
 \phantom & \(\eta_\alpha \eta_\beta\) & \(\eta_\alpha^\dagger \eta_\beta\) & \(\eta_\alpha \eta_\beta^\dagger\) & \(\eta_\alpha^\dagger \eta_\beta^\dagger\) \\ \hline \hline
 \(\eta_\gamma \eta_\delta\)                 & 0 & \(\eta_\delta \eta_\beta \delta_{\alpha\gamma} - \eta_\gamma \eta_\beta \delta_{\alpha\delta}\) & \(\eta_\alpha \eta_\delta \delta_{\beta\gamma} - \eta_\alpha \eta_\gamma \delta_{\beta\delta}\) & \(\eta_\alpha^\dagger \eta_\delta \delta_{\beta\gamma} - \eta_\alpha^\dagger \eta_\gamma \delta_{\beta\delta} + \eta_\delta \eta_\beta^\dagger \delta_{\alpha\gamma} - \eta_\gamma \eta_\beta^\dagger \delta_{\alpha\delta}\) \\ \hline
 \(\eta_\gamma^\dagger \eta_\delta\)         & \(\eta_\alpha \eta_\delta \delta_{\beta\gamma} + \eta_\delta \eta_\beta \delta_{\alpha\gamma}\) & \(-\eta_\gamma^\dagger \eta_\beta \delta_{\alpha\delta} + \eta_\alpha^\dagger \eta_\delta \delta_{\beta\gamma}\) & \(\eta_\delta \eta_\beta^\dagger \delta_{\alpha\gamma} - \eta_\alpha \eta_\gamma^\dagger \delta_{\beta\delta}\) & \(- \eta_\gamma^\dagger \eta_\beta^\dagger \delta_{\alpha\delta} - \eta_\alpha^\dagger \eta_\gamma^\dagger \delta_{\beta\delta}\) \\ \hline
 \(\eta_\gamma \eta_\delta^\dagger\)             & \(- \eta_\gamma \eta_\beta \delta_{\alpha\delta} - \eta_\alpha \eta_\gamma \delta_{\beta\delta}\) & \(\eta_\delta^\dagger \eta_\beta \delta_{\alpha\gamma} - \eta_\alpha^\dagger \eta_\gamma \delta_{\beta\delta}\) & \(-\eta_\gamma \eta_\beta^\dagger \delta_{\alpha\delta} + \eta_\alpha \eta_\delta^\dagger \delta_{\beta\gamma}\) & \(\eta_\delta^\dagger \eta_\beta^\dagger \delta_{\alpha\gamma} + \eta_\alpha^\dagger \eta_\delta^\dagger \delta_{\beta\gamma}\) \\ \hline
 \(\eta_\gamma^\dagger \eta_\delta^\dagger\) & \(\eta_\alpha \eta_\delta^\dagger \delta_{\beta\gamma} - \eta_\alpha \eta_\gamma^\dagger \delta_{\beta\delta} + \eta_\delta^\dagger \eta_\beta \delta_{\alpha\gamma} - \eta_\gamma^\dagger \eta_\beta \delta_{\alpha\delta}\) & \(\eta_\alpha^\dagger \eta_\delta^\dagger \delta_{\beta\gamma} - \eta_\gamma^\dagger \eta_\alpha^\dagger \delta_{\beta\delta}\) & \(\eta_\delta^\dagger \eta_\beta^\dagger \delta_{\alpha\gamma} - \eta_\gamma^\dagger \eta_\beta^\dagger \delta_{\alpha\delta}\) & 0 \\
 \hline
\end{tabular}
\end{center}
\end{table*}
\section{Low energy part of BCS-matrix, numerical test}\label{sec:dgij_num_test}
In this subsection, we want to numerically test \cref{eq:gij_final_momentum}. We fix average post-quench field at \(h = 1.5\) and choose \(p_\alpha = 1 \, , \,  p_\beta = 2\), so that \(\Tilde{G}_{\alpha\beta}\) is maximal. Next, we put \(b_q = \frac{D}{\sqrt{N}}\), where \(\sqrt{N}\) is just a Fourier-normalization factor. Substituting these to \cref{eq:gij_final_momentum}, we obtain:
\begin{equation}\label{eq:gij_numerical}
    \log \left(\max\limits_{\alpha\beta}\abs{G_{\alpha\beta}}\right) = 2.5 \log N + \log D + C,
\end{equation}
where \(C \approx -6.729\). We conducted a series of numerical tests. First - with a fixed number of spins \(N = 1000\) and varied perturbation amplitude - see \cref{fig:gij_numerical_disord}. Second - with fixed perturbation amplitude \(D = 10^{-5}\) and varied number of spins - see \cref{fig:gij_numerical_disord}.

\end{document}